\newtheorem{remark}{Remark}
\newtheorem{theorem}{Theorem}
\newtheorem{lemma}{Lemma}
\newtheorem{corollary}{Corollary}
\newtheorem{assumption}{Assumption}
\newtheorem{problem}{Problem}
\begin{document}

\title{Observation-based Optimal Control Law Learning with LQR Reconstruction}
\author{Chendi Qu, Jianping He, Xiaoming Duan 
	\thanks{
	 The authors are with the Dept. of Automation, Shanghai Jiao Tong University, and Key Laboratory of System Control and Information Processing, Ministry of Education of China, Shanghai, China. E-mail address: \{qucd21, jphe, xduan\}@sjtu.edu.cn. 
	}
 \thanks{Part of this paper has been accepted to IFAC 2023 World Congress \cite{qu2023control}.}
}

\maketitle

\begin{abstract}
Designing controllers to generate various trajectories has been studied for years, while recently, recovering an optimal controller from trajectories receives increasing attention. In this paper, we reveal that the inherent linear quadratic regulator (LQR) problem of a moving agent can be reconstructed based on its trajectory observations only, which enables one to learn the optimal control law of the agent autonomously. Specifically, the reconstruction of the optimization problem requires estimation of three unknown parameters including the target state, weighting matrices in the objective function and the control horizon. 
Our algorithm considers two types of objective function settings and identifies the weighting matrices with proposed novel inverse optimal control methods, providing the well-posedness and identifiability proof. We obtain the optimal estimate of the control horizon using binary search and finally reconstruct the LQR problem with above estimates. 
The strength of learning control law with optimization problem recovery lies in less computation consumption and strong generalization ability. We apply our algorithm to the future control input prediction and the discrepancy loss is further derived.
Numerical simulations and hardware experiments on a self-designed robot platform illustrate the effectiveness of our work. 
\end{abstract}

\section{Introduction}
Nowadays, as the development of localization, computer vision and planning, mobile agents have been widely applied in various fields and achieved high-profile success \cite{cao2012mobile}. Lots of studies focus on designing controllers to generate satisfying trajectories \cite{tzafestas2018mobile}, while in this paper we pay attention to its inverse problem: \emph{How to learn the interior controller based on agent's trajectory observations?}

To tackle this problem, let's consider a mobile agent driving from its initial position to a target point, which is one of the most common and basic scenarios of agents. In most situations, the agent is exposed to the physical world thus its trajectory can be observed by potential external attackers \cite{zhou2023robust,pasqualetti2013attack, wu2023secure}. We suppose there is another agent equipped with camera and computation ability trying to learn the trajectory and motion of the agent based on observations only.
The issue that agents learn from other agents through observations is similar to learning from demonstration (LfD) \cite{ravichandar2020recent,jin2022learning}, while in our case the agent needs to collect data actively instead of just being provided by human demonstrations. LfD has raised extensive studies recent years and applied to various fields including autonomous driving \cite{kuderer2015learning}, manufacturing \cite{kent2016construction} and human-robot interaction \cite{maeda2017probabilistic}. Given standard demonstrations, one mainstream category of LfD algorithms is to learn the control policy directly from the state observations to action \cite{rahmatizadeh2018vision,torabi2018behavioral}, known as end-to-end learning. However, this branch of methods usually require for large amount of demonstrations as training data and contain little generalization ability between different environments.


Therefore, our approach is to conduct a two-stage LfD, learning the interior controller of the agent first, which is a non-trivial problem, since the control objective of the agent is unknown. To further simplify the problem, we assume the moving agent is utilizing an LQR controller \cite{bemporad2002explicit}. Since LQR is a state-feedback controller, the control law is actually a series of feedback gain matrices. Note that if the controller is infinite-time, these matrices are invariant and can be estimate with the optimal state and input trajectories. However, in the finite-time case, this sequence of matrices is time invariant with unknown length.
Thus, in our paper the main idea is to reconstruct the control optimization problem. Once the optimization problem is recovered, we can obtain the control law and imitate the agent's motion with strong generalization performance. Assuming the dynamic model and objective function form is known, to reconstruct the LQR problem, we need to obtain the target state, the weighting matrices and the control horizon.

An inspiring two-stage LfD method is
inverse optimal control (IOC) \cite{ab2020inverse, jin2019inverse}, which is utilized to identify objective function parameters being an important part of our algorithm. IOC algorithms diverse according to the assumption of the objective function form. As for the quadratic LQR form considered in our case, \cite{anderson2007optimal, priess2014solutions} both study the infinite-time problem assuming the constant control gain matrix is already known, while we provide the gain matrix estimator. 
In addition, \cite{pauwels2016linear, li2020continuous} propose approaches to identify objective function as continuous finite-time LQR and \cite{yu2021system,zhang2019inverse} solve IOC problems for discrete-time. However, few of these studies consider the classic LQR setting including final-state, process-state and process-cost terms with different parameters, and they usually require complete trajectory observations.
The past research on control horizon has mainly focused on adaptive horizon model prediction control. 
The control stability and effectiveness of the system can be decided by the horizon length \cite{primbs2000feasibility} and the horizon estimation refers to choosing an optimal length to balance the performance and computation cost \cite{bohn2021reinforcement,sun2019robust}. However in our paper, we study how to identify the horizon length of the optimization problem given optimal state trajectory observations with noises.

One important application of our optimization problem reconstruction algorithm is the control input and trajectory prediction, which can be the basis for attacker's subsequent attacks, interception or misleading. Some existing methods for trajectory prediction are data-driven and model-free, such as using polynomial regression \cite{chen2016tracking,gao2018online} or introducing neural networks, including the long short-term memory neural network \cite{altche2017lstm} or graph neural network \cite{mohamed2020social}. However, these usually require a large amount of observations as well as the model training in the early stage. Another class of algorithms is model-based.
For instance, \cite{schulz2018multiple} use an unscented Kalman filter to predict multi-agent trajectories. \cite{li2020unpredictable} measure the secrecy of the trajectory and proves that uniform distributed inputs maximize the unpredictability of the system. 
But these predictive models often assume that the control inputs at each step are known, which in our case, cannot be obtained directly.
Notice that if we build the reconstructed optimization problem of the agent, we are able to calculate the input at arbitrary states and predict the future trajectory accurately, combining with the estimation of the dynamic model and current state through system identification (SI) \cite{ljung1998system} and data fusion filters \cite{welch1995introduction,gillijns2007unbiased}, respectively.

Motivated by above discussion, we design a control law learning algorithm with LQR problem reconstruction based on trajectory observations. 
We first estimate the target point by finding the intersection of the trajectory extension lines. Then identify the weighting matrices in the objective function with IOC algorithms and obtain the optimal horizon estimate through a binary search method. Finally, we reconstruct the control optimization problem and solve its solution as the learned control law.
This algorithm can not only learn the control law and enable agents to imitate the motion from arbitrary initial states, but to predict the future input and state trajectory of the mobile agent precisely. One of the main challenges of problem reconstruction is how to build a proper IOC problem to identify the parameters simultaneously in complex quadratic objective function forms. Another lies in analyzing the estimation error of each part and their impact on the input prediction application.

This paper is an extension of our conference paper \cite{qu2023control}. The main differences include i) we reorganize the article structure to focus on LQR problem reconstruction, treating input prediction as an application, ii) the infinite-time case is considered, iii) the IOC problem for more complex form of the objective function is built and solved, iv) the sensitivity analysis of control horizon estimation is provided, v) extended simulations are provided and hardware experiments are added.
The main contributions are summarized as follows:
\begin{itemize}
\item We investigate the observation-based optimal control law learning issue and propose a novel LQR reconstruction algorithm for mobile agents, including the estimation of target state, objective function parameters and the control horizon. As far as we know, we are the first to consider and reveal that the entire optimization LQR problem can be recovered based on only observations.
\item We solve the IOC problem considering two forms of objective functions, including i) final-state only setting based on PMP conditions given incomplete state trajectories and ii) classic LQR setting based on condition number minimization. 
We provide the scalar ambiguity property analysis and further prove the uniqueness and identifiability of the problem.
Furthermore, a novel approach for estimating the agent's control horizon is presented, converting a non-convex integer optimization into a binary search problem.
\item We apply our LQR reconstruction algorithm to the future control inputs prediction problem, providing error sensitivity analysis based on the convergent property of the algebraic Riccati equation. 
\item Numerical simulations reveals the effectiveness of both objective parameters and control horizon estimations. Our algorithm shows low bias and variance of the error. Moreover, hardware experiments of input prediction conducted on our self-designed robot platform demonstrate the prediction accuracy and efficiency.
\end{itemize}

The remainder of the paper is organized as follows. Section \ref{preliminary} describes the problem of interest. Section \ref{ioc} analyzes the infinite-time case and studies the objective function identification in two settings. Section \ref{horizon} estimates the control horizon and summarizes the complete algorithm flow. Simulation results and hardware experiments are shown in Section \ref{sim}-\ref{exp}, followed by the conclusion in Section \ref{conc}.

\begin{figure}[t]
\centering
\includegraphics[width=0.38\textwidth]{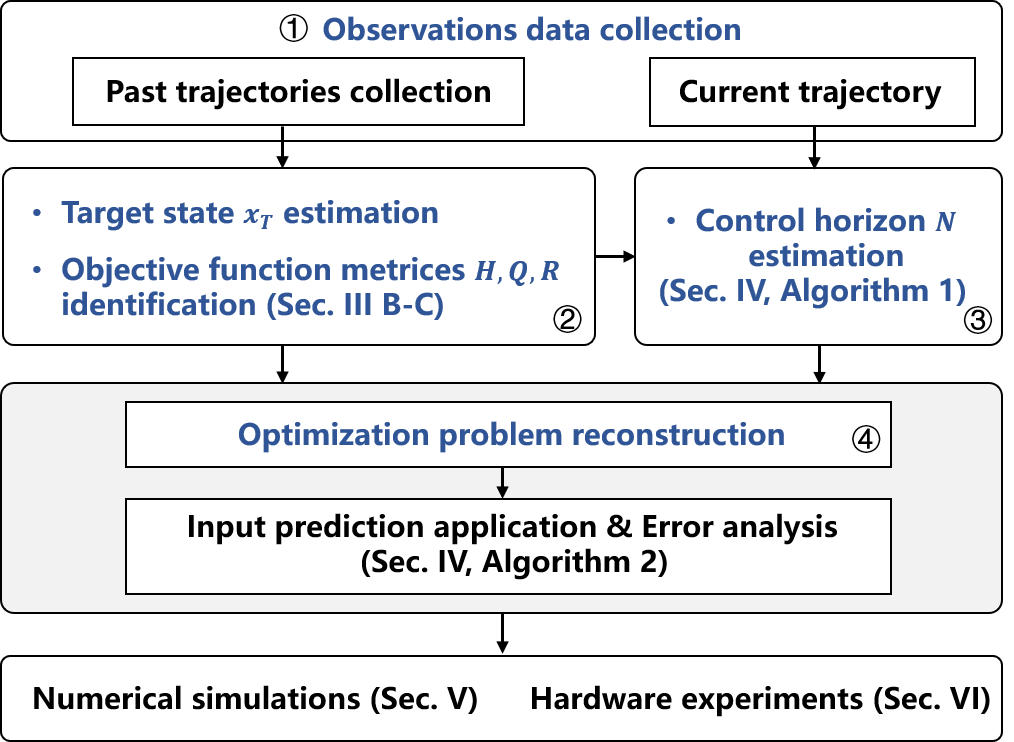}
\caption{Road-map of this paper.}
\label{alg_map}
\end{figure}

\begin{table}[t]
\small
\caption{\label{tab:test}Notation Definitions}
 \begin{tabular}{cl}
 \toprule 
  Symbol  & Definition  \\ 
  \midrule
  $\hat{x}_{k|k}$ & the estimation of state $x_k$ at time $k$;\\
  $\hat{u}_{k|k}$ & the prediction of input $u_k$ at time $k$;\\
  $x_T$ & the set target state;\\
  $A^c (A^c_k)$ & the closed-loop system matrix (at time $k$);\\
  $K (K_k)$ & the feedback gain matrix of LQR problem (at time $k$);\\
  $K_{i:j}$ & a matrix sequence $\{K_i, K_{i+1}, \cdots,K_j\}$;\\
  $N$ & the control horizon of a control process;\\
  $M$ & the number of trajectories in the observation data;\\
  $l (l_j) $ & the length of a single ($j$-th) observation trajectory;\\
  $\mathcal{Y}^j$ & the $j$-th observation trajectory;\\
  $\mathcal{X}^j$ & the state sequence estimated from observation $\mathcal{Y}^j$;\\
  $A(i,j)$ & The $i$-th row and $j$-th column element in matrix A;\\
  $A(:,j)$ & The $j$-th column of matrix A;\\
  $A^{\dagger}$ & Pseudo-inverse of matrix A;\\
  $I_n$ & an n $\times$ n dimensional identity matrix;\\
  $\|\cdot\| $ & the 2-norm of a vector; \\
  \bottomrule 
 \end{tabular} \label{table-n0}
\end{table}

\section{Preliminaries and Problem Formulation}\label{preliminary}
\subsection{Model Description}
Consider a mobile agent $R_m$ driving from its initial state to a fixed target point $x_T$. $R_m$ is modeled by a discrete-time linear system
\begin{equation}\label{sys}
{x}_{k+1} = A {x}_k + B {u}_k,
\end{equation}
where ${x}_k$ is the state vector, ${u}_k$ is the control input and $A, B$ are $n \times n$ and $ n \times m$ matrices.
Assume that (A, B) is controllable and B has
full column rank. Moreover, $A$ is an invertible matrix since the system matrix of a discrete-time system sampled from a continuous linear system is always invertible \cite{zhang2019inverse}. 
The output function is
\begin{equation}\label{ob_fun}
{y}_{k} = C {x}_k + \omega_k,
\end{equation}
where ${y}_k$ is the outputs such as agent's position and $C \in \mathbb{R}^{p \times n}$ is the observation capability matrix. We require $C$ is invertible in our algorithm and analysis since if $C$ is a fat matrix ($p <n$), the information contained in $x_t$ cannot be fully characterized by $y_t$ and the identification will be difficult \cite{molloy2016discrete}. 
$\omega_k \sim \mathcal{N}(0,\Gamma)$ is independent Gaussian observation noise. There is
\begin{equation}
\mathbb{E}(\omega_k) = 0, \, \mathbb{E}(\Vert \omega_k \Vert^2) < + \infty.
\end{equation}

Note that $R_m$ follows the optimal LQR control and the optimization problem is described as
\begin{subequations}
\begin{alignat}{2}
\mathbf{P}_0: & \min_{u_{0:N-1}} && J_0 = \frac{1}{2} x_N^T H x_N+\sum_{k = 0}^{N-1}\frac{1}{2} (x_k^T Q x_k +u_k^T R u_k), \nonumber \\ 
& ~~\mathrm{s.t.}
&& \eqref{sys}, x_0 = \bar{x}- x_T, \nonumber
\end{alignat}
\end{subequations}
where $H, R$ are positive definite matrices, $Q$ is a semi-positive definite matrix and $\bar{x}$ is the initial state. The first term in objective function $J_0$ reflects the deviation to the target state and the second term represents the cost of energy during the process. 
To simplify the problem description, we assume the agent sets the target as $0$ in $\mathbf{P}_0$ and performs coordinate system transformation on the initial state $\bar{x}$ with $x_T$.
It is known that the solution of $\mathbf{P}_0$ is
\begin{equation}\label{eq:uk}
u_k = -K_k x_k,\,k = 0,1,\dots,N-1,
\end{equation}
where $K_{0:N-1}=\{K_0,K_1,\dots,K_{N-1}\}$ is the control gain matrix sequence related to the system equation and control objectives, which is calculated through the iterative equations \eqref{dis_k} shown in Section \ref{lqr_setting}.

\subsection{Problem Formulation}
Now there is another external agent $R_o$ observing and recording the output trajectories of $R_m$. We assume $R_o$ has exact knowledge of the dynamic function, which can be accessed by SI methods, and the quadratic form of the objective function.
Suppose that $R_o$ obtains $M\geqslant n$ optimal trajectories $\{\mathcal{Y}^1, \cdots, \mathcal{Y}^M\}$ of $R_m$, where 
\[
\mathcal{Y}^j = \{y_0^j, y_1^j,\dots,y_{l_j}^j\}, j = 1, 2, \dots, M,
\]
is the $j$-th trajectory and $l_j+1 \geqslant 2n$ is the length.

\begin{assumption}\label{data_ass}
We require there exist at least $n$ linearly independent final states among $M$ trajectories, which means the matrix $\begin{bmatrix}
y_{l_1}^1 & \dots & y_{l_M}^M
\end{bmatrix}$ has a full row rank.
\end{assumption}



Consider that $R_o$ tries to learn the control law of $R_m$ based on observations, which is equal to estimate the control feedback gain matrix sequence $K_{0:N-1}$ of $\mathbf{P}_0$ accurately.
However, this is a non-trivial problem since $K_k$ is time-variant and the control objective of $R_m$ is unknown. Therefore, the main idea of this paper is to reconstruct the optimization problem $\mathbf{P}_0$ in order to solve $K_{0:N-1}$ directly, which means we need to estimate the following unknown parameters: 
\begin{itemize}
\item  target state ${x}_T$;
\item  weighting matrices in objective function $H, Q, R$;
\item  control horizon length ${N}$.
\end{itemize}

If the above estimations are accurate, we can obtain the real control law of $R_m$ accurately by solving the reconstructed optimization problem. Notice that the target state estimation is quite trivial and less important in our problem reconstruction. We have the following remark.
\begin{remark}
To estimate the target state, we can do curve fitting on at least two non-parallel observed trajectories, whose intersection is calculated as the estimate $\hat{x}_T$.
Moreover, one simple way is to directly observe the final state in multiple trajectories and take the average as the target value.
\end{remark}
Hence, in the following sections, we will estimate the objective function parameters and control horizon accordingly, and then formulate the optimization problem with all the estimates.

\section{Objective Function Identification}\label{ioc}
In this section, we will firstly discuss the control law learning in the infinite horizon case as an inspiration. Then, we focus on finite-time problem and estimate the weighting parameters in the objective function through IOC methods. 
Since in some situations, the agent only pays attention to whether the target state can be reached and the whole energy consumption during the process instead of the transient states. Therefore, we will first consider the final-state only setting based on looser data assumptions. In the third subsection, we will solve the classic LQR setting IOC problem.

\subsection{Infinite Horizon Case}\label{inf_sec}
When the control horizon $N$ of agent $R_m$ goes to infinity, the objective function in $\mathbf{P}_0$ changes into
\[
J^{\infty} = \frac{1}{2}\sum_{k = 0}^{\infty} (x_k^T Q x_k + u_k^T R u_k).
\]
and the optimal control law $K$ is given by
\begin{equation}\label{k_inf}
K =  (R + B^T P B)^{-1} B^T P A,
\end{equation}
where the intermediate parameter $P$ satisfies the following Riccati equation:
\begin{equation}\nonumber
P = A^TP A - A^T P B(R+ B^T P B)^{-1} B^TP A + Q.
\end{equation}
Since in the infinite case, the feedback gain matrix $K$ is constant, we can estimate $K$ based on observations directly without the control problem reconstruction.

Denote the closed-loop system matrix $A^c = A - BK$ and its spectral radius satisfies $\rho(A^c) < 1$. We have
\begin{equation}\nonumber
y_{k+1} = C A^c C^{-1} (y_k - \omega_k) + \omega_{k+1}
\end{equation}
for all $k$, which is a typical form of first-order stationary vector auto-regression process.
We can calculate $A^c$ inspired by ordinary least square (OLS) method with one single trajectory $\mathcal{Y}^1$.
The estimator is designed as
\begin{equation}
\hat{A}^c =
C^{-1} (\frac{1}{l} Y X^T) (\frac{1}{l} X X^T \! -\! I_l \otimes \Gamma)^{-1} C, 
\end{equation}
where $X = (y^1_0 , y^1_1, \dots, y^1_{l-1})^T$, $Y = (y^1_1 , y^1_2 ,\dots, y^1_{l})^T$ and $\Gamma = \mathrm{diag}\{\sigma^2_{\omega,1}, \cdots, \sigma^2_{\omega,p}\}$. The matrix $X^TX$ is required to be full rank to guarantee the uniqueness. 

Then, we learn the constant gain matrix
\begin{equation}
\hat{K} = B^{\dagger} (A - \hat{A}^c).
\end{equation}
as control law.
The error between the estimation and the true $K$ can be restricted by:
\begin{equation}\nonumber
\begin{aligned}
&\Vert K - \hat{K} \Vert_F  =  \Vert B^{\dagger} (A - A^c) - B^{\dagger} (A - \hat{A}^c) \Vert_F \\
 & \leqslant \Vert B^{\dagger} \Vert_F \Vert \hat{A}^c - A^c \Vert_F \leqslant \Vert B^{\dagger} \Vert_F \cdot \sqrt{n} \Vert \hat{A}^c - A^c \Vert.
\end{aligned}
\end{equation}
According to Theorem 6 in \cite{li2023topology}, the estimation error of gain matrix $K$ converges to zero as the number of samples $l$ goes larger, which is
\begin{equation}
\lim_{l \rightarrow \infty} \Vert \hat{A}^c - A^c \Vert =0, ~\Vert \hat{A}^c - A^c \Vert \sim \mathcal{O}(\frac{1}{\sqrt{l}}).
\end{equation}

The control law learning in infinite case is quite trivial but inspiring, revealing that the feedback gain matrix $K$ can be obtained through OLS-like estimators. In the next subsections, we will turn back to the finite-time problem.

\subsection{Final-state Only Setting}

In this subsection, We study a simplified objective function setting without considering the penalties on states $x_k^T Q x_k$ $(i.e., Q=0)$ described as
\begin{equation}
J_1 = \frac{1}{2} (x_N^T x_N+\sum_{k = 0}^{N-1} u_k^T R u_k).
\end{equation}
We set $H=I$ in the sense that for minimizing the distance between the final state and the target point, the weights on each component of $x_N$ are usually considered equal.
Then, the IOC problem here is to estimate the parameters $R$ only with $M$ optimal state trajectories in the presence of observation noises. We have nothing requirements for data other than Assumption \ref{data_ass}, which means these $M$ trajectories can be just fragments of some complete state trajectories. 

According to the Pontryagin's minimum principle (PMP) \cite{bertsekas2012dynamic}, we introduce the following lemma:
\begin{lemma}\label{lem_pmp}
Consider the optimization problem $\mathbf{P}_0$ with $J_1$.
The optimal control inputs $u_{0:N-1}^*$ and its corresponding state trajectories $x_{0:N}^*$ satisfy
\begin{itemize}
\item[1)]  optimal control policy
$u_i^* = - R^{-1} B^T \lambda_{i+1}^*$,
\item[2)]  costate equation $\lambda_i^* = A^T \lambda_{i+1}^*$,
\item[3)]  terminal condition $\lambda_N^* = H x_N^*$,
\end{itemize}
with the given initial state $x_0^*$, where $\lambda_i$ is the costate of the system, $i = 0,1,\cdots,N-1$.
\end{lemma}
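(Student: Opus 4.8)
The plan is to treat $\mathbf{P}_0$ with objective $J_1$ as an equality-constrained minimization over the controls $u_{0:N-1}$ together with the induced states $x_{1:N}$, and to derive the three listed conditions as the first-order stationarity (KKT) conditions via the method of Lagrange multipliers. Concretely, I would adjoin each dynamics constraint $x_{k+1}=Ax_k+Bu_k$ with a costate vector $\lambda_{k+1}$ and form the Lagrangian
\begin{equation}\nonumber
\mathcal{L} = \frac{1}{2} x_N^T H x_N + \sum_{k=0}^{N-1} \left[ \frac{1}{2} u_k^T R u_k + \lambda_{k+1}^T (A x_k + B u_k - x_{k+1}) \right].
\end{equation}
Although $J_1$ sets $H=I$, I keep $H$ general so the derivation matches the statement of Lemma \ref{lem_pmp} verbatim and specializes at the end.

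Next I would impose the stationarity conditions $\partial\mathcal{L}/\partial u_k=0$ and $\partial\mathcal{L}/\partial x_k=0$. For the control, $\partial\mathcal{L}/\partial u_k = Ru_k + B^T\lambda_{k+1}=0$; since $R\succ0$ is invertible, solving gives $u_k^\ast=-R^{-1}B^T\lambda_{k+1}^\ast$, which is condition~1). For the states the argument is more delicate: the terms $-\lambda_{k+1}^T x_{k+1}$ must be reindexed (set $j=k+1$) so that, after collecting coefficients, each interior state $x_k$ with $1\le k\le N-1$ picks up the contribution $A^T\lambda_{k+1}-\lambda_k$. Setting this to zero yields the costate recursion $\lambda_k^\ast = A^T\lambda_{k+1}^\ast$, i.e. condition~2). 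The terminal state $x_N$ appears only in $\tfrac12 x_N^T H x_N$ and in the reindexed boundary term $-\lambda_N^T x_N$, so its stationarity gives $Hx_N-\lambda_N=0$, i.e. condition~3); the initial state $x_0$ is fixed by $x_0=\bar{x}-x_T$ and therefore contributes no equation.

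I expect the main obstacle to be exactly this index bookkeeping in the summation-by-parts: correctly shifting the $-\lambda_{k+1}^T x_{k+1}$ sum and cleanly isolating the two endpoints $k=0$ (fixed state, no stationarity equation) and $k=N$ (the source of the terminal condition), so that the interior recursion and the boundary conditions emerge without off-by-one errors. Everything else is routine quadratic differentiation.

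Finally, I would argue that these necessary conditions are in fact sufficient. Because $R\succ0$ and $H\succeq0$ make $J_1$ convex in $(u_{0:N-1},x_{1:N})$ and the dynamics constraints are affine, the KKT system has a unique solution that coincides with the global minimizer. Hence the starred trajectories $u_{0:N-1}^\ast$, $x_{0:N}^\ast$ together with the costates $\lambda_{0:N}^\ast$ satisfy conditions~1)--3), completing the proof of Lemma \ref{lem_pmp}.
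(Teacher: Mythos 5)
Your proposal is correct. The paper gives no proof of this lemma at all — it is stated as a direct specialization of the discrete-time Pontryagin minimum principle with a citation to Bertsekas — and your Lagrangian/KKT derivation (stationarity in $u_k$ giving condition 1, the reindexed stationarity in the interior $x_k$ giving the costate recursion since $Q=0$ here, and stationarity in the free terminal state $x_N$ giving $\lambda_N = Hx_N$) is exactly the standard argument that citation points to. The added sufficiency remark via convexity of $J_1$ under $R\succ 0$, $H\succeq 0$ with affine dynamics is sound, though the lemma only claims necessity.
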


It is straightforward to find that $\mathbf{P}_0$ has the same optimal solution with objective functions $J_1$ and $\alpha J_1, \alpha \in \mathbb{R}_+$. We provide following lemma to reveal that PMP conditions in Lemma \ref{lem_pmp} have this scalar ambiguity property as well. 
\begin{lemma}\label{lem_hr}
Suppose parameters $H, R, x_{1:N}^j,\lambda_{1:N}^j$ satisfy the PMP conditions, then $H'=\alpha H, R' = \alpha R, x_{1:N}^j,\alpha \lambda_{1:N}^j$ for $\alpha \in \mathbb{R}_+$ are also a set of solution to these conditions.
\end{lemma}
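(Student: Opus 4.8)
The plan is to verify directly that the scaled tuple $(H', R', x_{1:N}^j, \alpha\lambda_{1:N}^j)$ satisfies each of the three PMP conditions in Lemma~\ref{lem_pmp}, exploiting the fact that the scalar $\alpha$ cancels in every equation. The key observation is that the optimal input $u_i^*$ must remain unchanged under the rescaling, so that the state trajectory $x_{1:N}^j$ --- which is generated from $u_{0:N-1}^*$ and the fixed initial state through the dynamics \eqref{sys} --- is automatically preserved. This is why the statement keeps $x_{1:N}^j$ fixed while scaling only $H$, $R$, and the costates.

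First I would check the optimal control policy (condition~1). Substituting $R' = \alpha R$ and $\lambda_{i+1}' = \alpha \lambda_{i+1}^*$ gives
\[
-{R'}^{-1} B^T \lambda_{i+1}' = -(\alpha R)^{-1} B^T (\alpha \lambda_{i+1}^*) = -R^{-1} B^T \lambda_{i+1}^* = u_i^*,
\]
so the reciprocal scaling of $R^{-1}$ against the costate leaves the input, and hence the entire state trajectory, unchanged. Next, for the costate equation (condition~2), multiplying both sides of $\lambda_i^* = A^T \lambda_{i+1}^*$ by $\alpha$ yields $\alpha\lambda_i^* = A^T(\alpha\lambda_{i+1}^*)$, i.e. $\lambda_i' = A^T \lambda_{i+1}'$, using only linearity in the costate. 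Finally, for the terminal condition (condition~3), since $x_N^j$ is unchanged we have $H' x_N^j = \alpha H x_N^j = \alpha\lambda_N^* = \lambda_N'$.

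Since all three conditions hold for the primed quantities, the scaled tuple is also a valid PMP solution, which establishes the claimed scalar ambiguity. I do not expect any genuine obstacle here, as the result follows purely from linearity in the costates and the cancellation of $\alpha$ between $R^{-1}$ and the scaled $\lambda$. The only point worth stating carefully is the consistency of holding $x_{1:N}^j$ fixed, which I would justify explicitly via condition~1 above rather than posit as an independent assumption. This ambiguity is precisely what motivates imposing a normalization (such as fixing $H=I$, as already adopted) so that the subsequent identification of $R$ becomes well-posed.
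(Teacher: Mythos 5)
Your proof is correct and takes essentially the same route as the paper's: both verify directly that $\alpha$ cancels between $(\alpha R)^{-1}$ and the scaled costate in the control policy, so the input and hence the state trajectory $x_{1:N}^j$ are preserved, while the costate recursion and terminal condition are linear in $\lambda$ and $H$. Your write-up is merely a bit more explicit in checking the three PMP conditions one by one.
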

\begin{proof}
The proof is given in Appendix \ref{pr1}.
\end{proof}
Note that the PMP condition is the necessary condition for the optimal solution of $\mathbf{P}_0$.
We set PMP condition as the constraint and formulate the inverse control problem Problem \ref{prob_r} based on $M$ trajectory observations $\{\mathcal{Y}^1,\cdots,\mathcal{Y}^M\}$. Since there exist observation noises, $x_{1:N_j}$ and $\lambda_{1:N_j}$ are also optimization variables.
Moreover, in our case we constrain $H=I$, then according to Lemma \ref{lem_hr}, the optimal solution to Problem \ref{prob_r} is supposed to be unique.
\begin{problem}(Inverse Control Problem with PMP)\label{prob_r}
\begin{subequations}
\begin{alignat}{2}\nonumber
&\min_{\hat{R},x_{1:N_j}^j,\lambda_{1:N_j}^j} &&\frac{1}{M}\sum_{j=1}^M \sum_{i = 1}^{N_j} \Vert y_i^j - C x^j_i \Vert^2\\ \nonumber
&~~~~~~\mathrm{s.t.} &&x^j_{i+1} = A x^j_i - B\hat{R}^{-1}B^T \lambda^j_{i+1},\\ \nonumber
&~~~~~~~&& \lambda^j_i = A^T \lambda^j_{i+1}, \, \lambda^j_{N_j} = x^j_{N_j}, \, x^j_0 = y_0^j - \hat{x}_T, \\ \nonumber
&~~~~~~~&& i = 0,1,\cdots,N_j-1, j = 1, \dots,M.
\end{alignat}
\end{subequations}
\end{problem}

We offer the following theorem to further prove the well-posedness property of the problem, which means the inverse problem of $\mathbf{P}_0$ with $J_1$ exists a unique solution. Denote $A^c_k = A - B K_k$ as the close-loop system matrix at time $k$.

\begin{theorem}\label{thm_wp}
Suppose two closed-loop system matrix sequences $A^c_{0:N-1}, {A^c}'_{0:N-1}$ are optimal solutions to problem $\mathbf{P}_0$ with $J_1(R)$ and $J_1(R')$ respectively. If we have $A^c_k = {A^c}'_k$ for all $k$, there is $R = R'$.
\end{theorem}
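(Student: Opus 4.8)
The plan is to reduce the hypothesis on the closed-loop matrices to an equality of feedback gains, and then extract $R$ from the single terminal step of the LQR Riccati recursion, where the cost-to-go matrix is pinned down by the known weight $H=I$. First I would observe that $A^c_k = {A^c}'_k$ means $A - BK_k = A - BK'_k$, i.e. $B(K_k - K'_k)=0$; since $B$ has full column rank, left-multiplying by $B^{\dagger}=(B^T B)^{-1}B^T$ gives $K_k = K'_k$ for every $k$. In particular the gains agree at the last stage $k=N-1$.

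Next I would write the backward Riccati recursion for $\mathbf{P}_0$ with $J_1$ (so $Q=0$ and $H=I$): the cost-to-go matrix initializes at $P_N = H = I$ and the gains satisfy $K_k = (R + B^T P_{k+1} B)^{-1} B^T P_{k+1} A$. The crucial point is that at $k=N-1$ the cost-to-go matrix is the \emph{same} for both problems, namely $P_N = P'_N = I$, because $H$ is fixed and is not one of the unknowns. Hence
\[
K_{N-1} = (R + B^T B)^{-1} B^T A, \qquad K'_{N-1} = (R' + B^T B)^{-1} B^T A,
\]
and both inverses exist because $R,R'$ positive definite makes $R+B^TB$ and $R'+B^TB$ positive definite.

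Equating these gains and setting $G := (R+B^TB)^{-1}$ and $G' := (R'+B^TB)^{-1}$ yields $(G - G') B^T A = 0$. I would finish by showing this forces $G = G'$: for any row $v^T$ of $G-G'$ we have $v^T B^T A = (Bv)^T A = 0$, so $A^T (Bv)=0$; since $A$ is invertible this gives $Bv = 0$, and since $B$ has full column rank this gives $v=0$. Thus $G=G'$, so $R + B^T B = R' + B^T B$, and therefore $R = R'$.

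The argument is short precisely because fixing $H=I$ collapses the identifiability question to the terminal stage, so no information from the earlier gains is actually required. The only real leverage comes from the two structural assumptions ($A$ invertible and $B$ full column rank) that let me cancel $B^T A$ on the right. Consequently the main point to be careful about is to \emph{justify} that cancellation via the null-space chain above, rather than merely asserting that $B^T A$ has full row rank. If a stronger statement were desired, the same terminal-step identity could be promoted into a downward induction establishing $P_k = P'_k$ for all $k$, but this is unnecessary for the stated conclusion.
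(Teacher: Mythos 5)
Your proof is correct, but it is genuinely more direct than the one in the paper. The paper argues by contradiction: it sets $\Delta R = R'-R$, $\Delta P_k = P'_k - P_k$, derives the propagation identity $\Delta P_k = (K_k^T B^T + {A^c_k}^T)\,\Delta P_{k+1}\,A^c_k$, uses $\Delta P_N = 0$ (since $P_N = P'_N = H = I$) to conclude by downward induction that $\Delta P_k = 0$ for all $k$, and then deduces $\Delta R\, K_k = 0$, finishing by showing each $K_k$ has full row rank $m$ (via $P_k \succ 0$ and $A$ invertible). You instead exploit only the terminal stage: since $P_N = P'_N = I$ is pinned by the known weight $H=I$, the gains $K_{N-1} = (R+B^TB)^{-1}B^TA$ and $K'_{N-1} = (R'+B^TB)^{-1}B^TA$ depend on $R$ and $R'$ alone, and equality of the gains lets you cancel $B^TA$ on the right (your null-space chain is a valid justification, equivalent to observing that $B^TA$ has full row rank $m$ because $A$ is invertible and $B$ has full column rank). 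Both arguments rest on exactly the same structural hypotheses ($H$ fixed, $A$ invertible, $B$ full column rank), but yours collapses the identifiability question to a single step and avoids the induction entirely; the paper's longer route does establish the stronger intermediate fact $P_k = P'_k$ for every $k$, which you correctly note is unnecessary for the stated conclusion. Your first step, reducing $A^c_k = {A^c_k}'$ to $K_k = K'_k$ via the full column rank of $B$, coincides with the paper's.
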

\begin{proof}
The proof is given in Appendix \ref{pr2}.
\end{proof}

Furthermore, we present Theorem \ref{thm_r} as follow to reveal that the solution of Problem \ref{prob_r} is the true value of $R$ when the amount of observations is large enough.
\begin{theorem}\label{thm_r}
Suppose $\hat{R}, x_{1:N}^*,\lambda_{1:N}^*$ are the optimal solution to Problem \ref{prob_r}. We have
\[
\mathrm{Pr}(\lim_{M \rightarrow \infty} \Vert \hat{R} - R \Vert =0)=1.
\]
\end{theorem}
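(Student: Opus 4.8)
The plan is to read Problem \ref{prob_r} as an M-estimation problem and establish strong consistency of its minimizer by the standard three-step route: identify the almost-sure limit of the empirical objective, show that this limit is uniquely minimized at the true $R$ (invoking the identifiability of Theorem \ref{thm_wp}), and transfer the minimization through a uniform law of large numbers. First I would eliminate the auxiliary variables. For any fixed positive-definite $\hat R$, the PMP constraints form a two-point boundary value problem whose solution is exactly the closed-loop LQR rollout: the associated Riccati recursion yields a gain sequence $K_{0:N-1}(\hat R)$ and hence a state trajectory $x_i^j(\hat R) = \big(\prod_{k=0}^{i-1} A^c_k(\hat R)\big) x_0^j$ driven from the shared initial condition $x_0^j = y_0^j - \hat x_T$. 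Thus the feasible set is parametrized by $\hat R$ alone, the objective becomes $L_M(\hat R) = \frac{1}{M}\sum_{j=1}^M \sum_{i=1}^{N_j}\Vert y_i^j - C x_i^j(\hat R)\Vert^2$, and $\hat R$ is a genuine M-estimator $\hat R = \arg\min L_M$.

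Next I would decompose each residual. Writing $x_i^{j\ast}$ for the true rollout generated with $R$ from the same $x_0^j$, we have $y_i^j - C x_i^j(\hat R) = C\big(x_i^{j\ast} - x_i^j(\hat R)\big) + \omega_i^j$, so expanding the square separates a deterministic mismatch term, a cross term, and the noise energy $\Vert\omega_i^j\Vert^2$. Because the $\omega_i^j$ are zero-mean, independent of the states, and have finite second moment, a strong law of large numbers over the $M$ independent trajectories gives $L_M(\hat R) \to L(\hat R)$ almost surely, where the cross term averages to zero and $L(\hat R) = c_0 + \mathbb{E}\big[\sum_i \Vert C(x_i^{j\ast}-x_i^j(\hat R))\Vert^2\big]$ for an $\hat R$-independent noise floor $c_0$. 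Clearly $L$ is bounded below by $c_0$, with equality iff the rollouts coincide in mean square. Since the observed trajectories are rich enough to excite all $n$ state directions (Assumption \ref{data_ass}) and $C$ is invertible, equality forces $A^c_k(\hat R) = A^c_k(R)$ for every $k$; Theorem \ref{thm_wp} then yields $\hat R = R$. Hence $R$ is the unique global minimizer of the population objective $L$.

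Finally I would upgrade the pointwise convergence to almost-sure convergence of the argmin. This needs (i) restricting $\hat R$ to a compact set of positive-definite matrices bounded away from singularity, justified because a degenerate or exploding $\hat R$ makes the Riccati gains and hence the rollout mismatch blow up, so the minimizer stays interior; (ii) continuity of $\hat R \mapsto x_i^j(\hat R)$, which follows from smoothness of the Riccati recursion on the positive-definite cone; and (iii) a uniform law of large numbers $\sup_{\hat R}\vert L_M(\hat R)-L(\hat R)\vert \to 0$ almost surely, obtained from the pointwise SLLN plus stochastic equicontinuity via a Lipschitz bound on the per-trajectory loss over the compact parameter set. With a unique, well-separated minimizer and uniform convergence, the classical argmin-consistency theorem for extremum estimators delivers $\mathrm{Pr}(\lim_{M\to\infty}\Vert\hat R - R\Vert = 0) = 1$.

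I expect the main obstacle to be the uniform law of large numbers together with the compactness argument: one must rule out that the empirical minimizer drifts toward the boundary of the positive-definite cone, where $\hat R^{-1}$ is unbounded, and must control the modulus of continuity of the rollout map uniformly in $\hat R$, since the state transition is a long product of $\hat R$-dependent matrices whose Lipschitz constant could in principle grow with the horizon. A secondary subtlety is the noise entering the shared initial condition $x_0^j = y_0^j - \hat x_T$ together with the error in $\hat x_T$: these perturb the true and predicted rollouts identically and should therefore cancel in the mismatch term, but this cancellation must be verified rather than assumed.
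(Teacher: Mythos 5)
Your proposal is correct in spirit and shares its first move with the paper -- eliminating the auxiliary variables $x_{1:N}^j,\lambda_{1:N}^j$ so that the objective of Problem \ref{prob_r} becomes a function of $\hat R$ alone -- but the two arguments then diverge in how they finish. The paper performs the elimination algebraically: it stacks the PMP constraints into a single block-banded linear system $\mathscr{F}(R)Z=\tilde A x_0$ with $\mathscr{F}(R)$ invertible, so the fitted trajectory is the explicit map $G_X\mathscr{F}(R)^{-1}\tilde A x_0$, and then it outsources the entire statistical argument to Theorem 4.1 of \cite{zhang2019inverse}, concluding $\hat R\xrightarrow{\mathrm{P}}R$. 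You instead phrase the elimination through the Riccati/gain recursion and supply the consistency argument yourself via the standard extremum-estimator route (pointwise SLLN, identification of the population minimizer through Theorem \ref{thm_wp} and Assumption \ref{data_ass}, then a uniform LLN over a compact positive-definite parameter set). What your route buys is self-containment and, importantly, the correct mode of convergence: the theorem asserts almost-sure convergence, whereas the paper's proof as written only delivers convergence in probability, so your argmin-consistency argument, if completed, actually proves the stated claim. What the paper's route buys is that the rational dependence of $\mathscr{F}(R)^{-1}$ on $R$ makes continuity and the needed regularity essentially free, whereas you must separately control the modulus of continuity of a long product of $\hat R$-dependent gains and rule out drift of the minimizer toward the boundary of the cone -- obstacles you correctly identify but do not resolve. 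Both proofs leave the same genuine loose end untreated, namely that the rollout is initialized at the noisy $x_0^j=y_0^j-\hat x_T$ rather than the true initial state; you flag it, but your suggestion that the perturbation "cancels" is not right as stated, since only the fitted rollout, not the agent's true trajectory, is seeded with the noisy value.
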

\begin{proof}
The proof is given in Appendix \ref{pr3}.
\end{proof}
\noindent With the above theorems, we are able to obtain the objective function parameter estimate $\hat{R}$ by solving the Problem \ref{prob_r}.

\subsection{Classic LQR Setting}\label{lqr_setting}
Now we consider adding the process-state term into minimization and the objective function is written as
\begin{equation}\label{hqr_j}
J_0 = \frac{1}{2} x_N^T H x_N+\sum_{k = 0}^{N-1} \frac{1}{2}(x_k^T Q x_k +u_k^T R u_k).
\end{equation}
Solving the optimization problem $\mathbf{P}_0$, we obtain a sequence of time-varying gain matrices $K_{0:N-1}$ calculated by
\begin{equation}\label{dis_k}
K_k = (R + B^T P_{k+1} B)^{-1} B^T P_{k+1} A,
\end{equation}
and $P_k$ satisfies the following iterative equation with the initial value $P_N = H$:
\begin{equation}\label{dis_p}
P_k = K_k^T R K_k + (A - B K_k )^T P_{k+1} (A - B K_k)+Q.
\end{equation}

Inspired by the infinite-time case in Section \ref{inf_sec}, we propose an algorithm consisted of two parts: i) Estimate the feedback gain matrices $\hat{K}_k$ first based on the observation trajectories; ii) Calculate a proper $(\hat{H},\hat{Q},\hat{R})$ with matrices $\hat{K}_k$. We will describe the two parts separately.

\begin{remark}\label{two_alg}
Notice that the form of $J_0$ subsumes $J_1$ ($Q=0,H=I$) in the previous subsection. However, we still divide them and provide two different algorithms since for identifying $J_1$ there is no requirements on observations, which can only be a segment of the trajectory, while in this subsection we need the observation to be a whole trajectory containing the final state. This will be shown in the following part.
\end{remark}

\begin{itemize}[leftmargin=*]
\item {\textbf{Feedback Gain Matrices Estimation}}
\end{itemize}

Suppose we obtain $M$ trajectory observations $\{\mathcal{Y}^1, \dots,$ $\mathcal{Y}^M\}$. As Remark \ref{two_alg} saying, in this subsection we require the observation to contain the final state of each trajectory, which is not difficult to achieve since we have estimate the target state or we can just observe for enough long time waiting for the agent to get to its target. Then we take $l\leqslant \min\{l_1, \dots, l_M\}$ steps from the end of each trajectory and reorder them as 
\[
\bar{\mathcal{Y}}^j = \bar{y}^{j}_{0:l} = y^{j}_{l_j-l:l_j}.
\]
Now we have a truncated trajectory set $\{\bar{\mathcal{Y}}^{1}, \dots, \bar{\mathcal{Y}}^{M}\}$ for subsequent estimation. 
Denote the closed-loop system matrix at time $k$ as $A_k^c = A-BK_k$, then we have
\begin{equation}\label{fin_sys}
y_{k+1} = C A^c_k x_k + \omega_{k+1}
\end{equation}
for all $k$.
Based on equations \eqref{dis_k} and \eqref{dis_p}, we provide the following lemma:
\begin{lemma}\label{k_equal}
If $H,Q,R$ remain unchanged, considering two complete optimal state trajectories $\{x^{1}_{0:N_1}\}, \{x^{2}_{0:N_2}\}$ with different control horizons $N_1\leqslant N_2$ generated by the given system, we have matrix sequence $\{{K_{N_1}^{1}}, K_{N_1-1}^{1}, \dots, K_{0}^{1}\}$ equal to $\{{K_{N_2}^{2}}, K_{N_2-1}^{2}, \dots, K_{N_2-N_1}^{2}\}$.
\end{lemma}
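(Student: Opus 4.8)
The plan is to prove the claim by \emph{backward induction}, exploiting the fact that the gain matrices $K_k$ and the cost-to-go matrices $P_k$ are generated by a deterministic backward recursion whose per-step maps \eqref{dis_k}--\eqref{dis_p} depend only on the shared data $(A,B,Q,R)$ and whose terminal value is the shared matrix $H$. The horizon $N$ enters only through \emph{how many} steps the recursion is iterated, not through \emph{what} the iterates are; hence once the two recursions are aligned from their respective terminal times, they must coincide step by step.

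Concretely, I would fix the alignment by comparing, for each offset $j\geqslant 0$, the pair $(K^1_{N_1-j},P^1_{N_1-j})$ against $(K^2_{N_2-j},P^2_{N_2-j})$, and then induct on $j$. For the base case, both recursions are initialized by the same terminal condition $P^1_{N_1}=H=P^2_{N_2}$; substituting this common value into the gain formula \eqref{dis_k} immediately gives $K^1_{N_1-1}=(R+B^T H B)^{-1}B^T H A=K^2_{N_2-1}$. For the inductive step, assume $P^1_{N_1-j}=P^2_{N_2-j}=:P$. Since \eqref{dis_k} expresses the gain as one fixed function of $(P,A,B,R)$, I obtain $K^1_{N_1-j-1}=K^2_{N_2-j-1}$; feeding this common gain together with the common $P$ back into \eqref{dis_p}, which is likewise a fixed function of $(K,P,A,B,Q,R)$, yields $P^1_{N_1-j-1}=P^2_{N_2-j-1}$. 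Iterating until $j=N_1$ propagates the equality through the entire shorter horizon, so that the tail of trajectory~1's gain sequence ($K^1_0$) matches the entry $K^2_{N_2-N_1}$ of trajectory~2, which is exactly the asserted equality of the two sequences.

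The only real care required is in the \emph{index bookkeeping}: keeping the shift $N_2-N_1$ between the horizons straight, and verifying that it is the alignment from the terminal time (rather than from time $0$) that synchronizes the two recursions. There is no genuine analytical obstacle, because $P_N=H$ is independent of $N$ and the backward maps \eqref{dis_k}--\eqref{dis_p} carry no explicit dependence on $N$; the coincidence of the sequences is therefore forced. In effect this is the discrete-time statement that the finite-horizon Riccati solution, read backward from the terminal time, does not depend on the horizon length, and it is precisely this property that will later justify truncating each observed trajectory to its final $l$ steps and pooling the resulting gains across trajectories of different lengths.
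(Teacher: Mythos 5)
Your proposal is correct and is precisely the argument the paper relies on: the paper states Lemma~\ref{k_equal} without an explicit proof, asserting only that it follows from the backward recursions \eqref{dis_k}--\eqref{dis_p}, and your backward induction on the offset $j$ from the terminal time, anchored at the horizon-independent initialization $P_{N_1}^{1}=H=P_{N_2}^{2}$, is exactly the reasoning being invoked. Your index bookkeeping is sound (indeed, starting from $K^1_{N_1-1}=K^2_{N_2-1}$ is more careful than the lemma's own statement, which lists the undefined entries $K^1_{N_1}$ and $K^2_{N_2}$), so there is nothing to repair.
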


Denote ${\mathcal{X}}^j$ as the state sequence estimated from truncated trajectory $\bar{\mathcal{Y}}^j$.
We utilize following method to estimate states $\hat{x}_k$ from the observation $y_k$ through a filter \cite{yu2021system}.
Denote $D=CB$. For $\{y_{0:l}\}$, there is
\begin{equation}\label{u_est}
\begin{array}{ll}
\zeta_{k+1} = (A-ABD^{\dagger}C) \zeta_k + ABD^{\dagger}(y_k - CA^k x_0), \\
\hat{u}_{k-1} = D^{\dagger} C\zeta_k - D^{\dagger}(y_k - CA^k x_0), ~k = 1,\dots,l,
\end{array}
\end{equation}
where $x_0 = C^{-1}y_0$ and the intermediate variable $\zeta_0 = 0$. Then with $\hat{u}_{0:l-1}$, we have
\begin{equation}\label{x_est}
\begin{array}{ll}
\eta_{k+1} = A \eta_k + AB \hat{u}_{k}, \,
\hat{x}_{k} = -\eta_k - B\hat{u}_{k} + A^k x_0,
\end{array}
\end{equation}
where the intermediate variable $\eta_0 = 0$. We omit the subscript here for brevity, i.e., $\hat{u}_{k}= \hat{u}_{k|l}$ and $\hat{x}_{k}= \hat{x}_{k|l}$.

Then, from Lemma \ref{k_equal}, it is obvious to find that sequence pairs $(\bar{\mathcal{Y}}^{j}, {\mathcal{X}}^j), j=1,\dots,M$ share the same $\{A^c_{0:l-1}\}$. Therefore, similar as the infinite time case, for each $k$, we design the estimator as
\begin{equation}\label{estimator}
\hat{A}^c_k =C^{-1} (Y_k X_k^T)(X_k X_k^T)^{-1},
\end{equation}
where $X_k = (\hat{x}_k^{1}, \dots, \hat{x}_k^{M})^T$ and $Y_k = (\bar{y}_{k+1}^{1} , \dots, \bar{y}_{k+1}^{M})^T$. The matrix $X_k^TX_k$ is required to be full rank ($M\geqslant n$) to guarantee the uniqueness. Then we have
\begin{equation}
\hat{K}_k = B^{\dagger} (A - \hat{A}_k^c),\,k=0,\dots,l-1,
\end{equation}
and $\lim_{l \rightarrow \infty} \Vert \hat{K}_k-K_k \Vert =0$ \cite{richter1995estimating}.
 
\begin{itemize}[leftmargin=*]
    \item {\textbf{Objective Function Parameter Calculation}}
\end{itemize}

After obtaining the feedback gain matrix estimation sequence $\hat{K}_{0:l-1}$, we now find a parameter set $(H,Q,R)$ that generates $\hat{K}_{0:l-1}$ exactly through iteration equations \eqref{dis_k}, \eqref{dis_p}. 
Similarly, we provide the following theorem first to show the scalar ambiguity property under this case.
\begin{theorem}\label{hqr_iden}
Suppose two feedback gain matrix sequences $K_{0:N-1}, K'_{0:N-1}$ are generated with two sets of parameters ${H},{Q},{R}$ and ${H'},{Q'},{R'}$ respectively through equation \eqref{dis_k}. If there exist at least $\frac{mn(n+1)(m+1)}{2}$ linearly independent $vec(\mathcal{P}_i(\mathcal{E}_i))$ defined in \eqref{pie} and $K_k=K'_k$ for all $k$, we have ${H'} = \alpha H, {Q'} = \alpha Q, {R'}=\alpha R$ for some $\alpha \in \mathbb{R}_+$.
\end{theorem}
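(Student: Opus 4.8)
The plan is to prove the two implications packed into the statement separately: the \emph{sufficiency} of the scalar relation (scaling the weights preserves the gains) and the \emph{necessity} (equal gain sequences force a scalar relation), with the independence hypothesis \eqref{pie} needed only for the second. I would first record the forward direction, since it fixes the target and explains why a scalar ambiguity is unavoidable. If $(H',Q',R')=\alpha(H,Q,R)$ with $\alpha>0$, then a backward induction on \eqref{dis_p} from $P_N=H$ gives $P_k'=\alpha P_k$ for every $k$, and substituting into \eqref{dis_k} the factor $\alpha$ cancels between $(R'+B^TP_{k+1}'B)^{-1}$ and $B^TP_{k+1}'A$, so $K_k'=K_k$. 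Thus the true triple and all its positive multiples lie in the set of weights producing the prescribed gain sequence; the theorem asserts that nothing else does.

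For the necessity direction I would convert ``equal gains'' into homogeneous linear constraints on the weights. Since $K_k=K_k'$ for all $k$, the closed-loop matrices coincide, $A_k^c=A-BK_k={A_k^c}'$, so both triples share the same transition structure. Rearranging \eqref{dis_k} yields the key identity $R K_k = B^T P_{k+1}(A-BK_k)=B^T P_{k+1}A_k^c$, and likewise $R'K_k=B^TP_{k+1}'A_k^c$. Unrolling \eqref{dis_p} with the common $A_k^c$ expresses each cost-to-go matrix as an explicit \emph{linear} function of the weights, $P_{k+1}=\Phi_{N,k+1}^T H\,\Phi_{N,k+1}+\sum_{j=k+1}^{N-1}\Phi_{j,k+1}^T(Q+K_j^TRK_j)\Phi_{j,k+1}$, where $\Phi_{j,k}=A_{j-1}^c\cdots A_k^c$ and $\Phi_{k,k}=I$. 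Because this gain relation is exactly equivalent to \eqref{dis_k} (as $R+B^TP_{k+1}B\succ0$), substituting the unrolled $P_{k+1}$ turns each gain equation into an $m\times n$ linear relation in the unknown triple; collecting them over $k=0,\dots,N-1$ and vectorizing produces a single homogeneous system $\mathcal{M}\,vec(H,Q,R)=0$, whose coefficient operator is assembled precisely from the blocks in \eqref{pie}. The set of weights consistent with the given gains is then exactly the linear subspace $\ker\mathcal{M}$.

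Next I would carry out the kernel analysis. By the forward direction the true triple spans a line inside $\ker\mathcal{M}$, so $\dim\ker\mathcal{M}\ge1$, and it remains to prove equality. The hypothesis of \eqref{pie}, namely the existence of $\frac{mn(n+1)(m+1)}{2}$ linearly independent vectors $vec(\mathcal{P}_i(\mathcal{E}_i))$, is exactly the nondegeneracy condition that pins $\mathcal{M}$ to the maximal rank compatible with the symmetric parametrization of $(H,Q,R)$, forcing $\dim\ker\mathcal{M}=1$. Consequently any primed triple producing the same gains lies on this line, i.e. $(H',Q',R')=\alpha(H,Q,R)$ for some scalar $\alpha$. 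Positivity $\alpha>0$ is then immediate from $R,R'\succ0$ (equivalently $H,H'\succ0$), since a zero or negative $\alpha$ would contradict positive definiteness.

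The step I expect to be the main obstacle is precisely this rank argument: showing that \eqref{pie} yields $\mathrm{corank}\,\mathcal{M}=1$ rather than merely $\ge1$. This hinges on (i) confirming that $(H,Q,R)\mapsto P_{k+1}$ is genuinely linear and respects the symmetric, and for $Q$ merely positive-semidefinite, structure, so that the only cancellations in $\mathcal{M}$ are those dictated by symmetry and the known kernel direction is nondegenerate; and (ii) verifying that the family $\{\mathcal{P}_i(\mathcal{E}_i)\}$ spans a complement to that kernel direction, so no second independent null vector survives. The bilinear term $K_j^TRK_j$ inside the unrolled $P_{k+1}$ couples $R$ with the transition matrices $\Phi_{j,k+1}$ and accounts for essentially all of the bookkeeping; once its contribution is shown not to collapse the rank, the scalar conclusion and its positivity follow at once.
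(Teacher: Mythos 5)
Your forward direction and the reduction of ``equal gains'' to the identity $RK_k = B^TP_{k+1}A_k^c$ with a common closed loop are fine, and the observation that unrolling \eqref{dis_p} makes $P_{k+1}$ linear in $(H,Q,R)$, so that the admissible weights form a linear subspace $\ker\mathcal{M}$, is a legitimate alternative framing. But the decisive step is missing, and you flag it yourself without closing it: you never show that the stated hypothesis forces $\dim\ker\mathcal{M}=1$. The difficulty is not merely technical bookkeeping. The quantity $vec(\mathcal{P}_i(\mathcal{E}_i))$ in \eqref{pie} is not a block of your coefficient matrix $\mathcal{M}$: it arises from a \emph{bilinear} identity in $R^{-1}$ and $(Q,H)$, which the paper obtains by writing $A_k^c=(I+BR^{-1}B^TP_{k+1})^{-1}A$ (Sherman--Morrison), deducing $R^{-1}B^TP_{k+1}=R'^{-1}B^TP'_{k+1}$ from equality of the closed-loop matrices, unrolling the $P$ recursion, differencing consecutive indices, and taking traces so that each scalar equation reads $vec(\overline{R^{-1}})^T\,\mathcal{P}_i(\mathcal{E}_i)\,[vec(\overline{Q});vec(\overline{H})]$ equal to the same expression with primes. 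Your homogeneous system in $(H,Q,R)$ is a different algebraic object (linear in $R$ rather than bilinear in $R^{-1}$ and $(Q,H)$), and the linear independence of the $vec(\mathcal{P}_i(\mathcal{E}_i))$ says nothing direct about its corank; a proof along your lines would need its own rank condition together with a separate argument that the theorem's hypothesis implies it.

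The paper's route closes this gap precisely because it works with the rank-one tensor $[vec(\overline{Q});vec(\overline{H})]\otimes vec(\overline{R^{-1}})$: once $\frac{mn(n+1)(m+1)}{2}$ of the vectors $vec(\mathcal{P}_i(\mathcal{E}_i))$ are linearly independent they span the whole tensor space, so the tensor is uniquely determined by the gain data, and equality of two rank-one tensors with nonzero factors immediately yields $[vec(\overline{Q'});vec(\overline{H'})]=\alpha\,[vec(\overline{Q});vec(\overline{H})]$ and $vec(\overline{R'^{-1}})=\frac{1}{\alpha}vec(\overline{R^{-1}})$, i.e.\ $H'=\alpha H$, $Q'=\alpha Q$, $R'=\alpha R$, with $\alpha>0$ forced by positive definiteness. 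To salvage your formulation you would have to either prove that corank-one of your $\mathcal{M}$ follows from the tensor-spanning condition, or restate the identifiability hypothesis in terms of your own coefficient matrix; as written, the kernel-dimension claim is an assertion rather than a proof.
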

\begin{proof}
See the proof in Appendix \ref{hqr_iden_pr}.
\end{proof}

\begin{corollary}
Theorem \ref{hqr_iden} provide a criteria for the identifiability of the objective function. If the control horizon is set as $N < \frac{mn(n+1)(m+1)}{2}$, the true weight parameters $H,Q,R$ of the control objective will never be identified accurately, which can be utilized in preserving the system's intention.
\end{corollary}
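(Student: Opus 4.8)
The plan is to prove the corollary as the necessity (rank-deficiency) counterpart of Theorem~\ref{hqr_iden}. That theorem guarantees recovery of $(H,Q,R)$ up to the positive scalar $\alpha$ as soon as one can extract $d:=\frac{mn(n+1)(m+1)}{2}$ linearly independent vectors from the family $\{\mathrm{vec}(\mathcal{P}_i(\mathcal{E}_i))\}$ of \eqref{pie}; the corollary asserts that a short horizon makes this impossible and, moreover, leaves a genuine ambiguity. The first step is purely combinatorial: each vector $\mathrm{vec}(\mathcal{P}_i(\mathcal{E}_i))$ is manufactured from the gain/Riccati recursions \eqref{dis_k}--\eqref{dis_p} along the trajectory, so the index $i$ cannot range beyond the $N$ stages of the control process. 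A collection of at most $N$ vectors contains at most $N$ linearly independent elements, so when $N<d$ it is impossible to exhibit the $d$ independent vectors that Theorem~\ref{hqr_iden} demands, and its sufficient condition for identifiability simply cannot be met.

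To turn this failure into the stronger claim that the parameters \emph{can never} be identified accurately, I would revisit the homogeneous linear system underlying the proof of Theorem~\ref{hqr_iden}. Rewriting \eqref{dis_k} as $RK_k=B^T P_{k+1}A^c_k$ and propagating \eqref{dis_p} from the terminal condition $P_N=H$ expresses every constraint $K_k=K'_k$ as a linear relation in the lifted unknowns built from $(H,Q,R)$; stacking these relations yields a system $\Phi\xi=0$ whose rows are exactly the vectors $\mathrm{vec}(\mathcal{P}_i(\mathcal{E}_i))$. Identifiability up to the scalar is equivalent to the kernel of $\Phi$ collapsing onto the scalar line $\{\alpha\,(H,Q,R)\}$, and the hypothesis $\mathrm{rank}(\Phi)=d$ of Theorem~\ref{hqr_iden} is precisely what secures this; I would argue it is also necessary. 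Since the attainable rank is capped by the at most $N$ available rows, $N<d$ forces $\mathrm{rank}(\Phi)<d$, so $\ker\Phi$ is strictly larger than the scalar line and contains a direction producing an admissible tuple $(H',Q',R')$ that is \emph{not} proportional to $(H,Q,R)$ yet reproduces the identical sequence $K_{0:N-1}$ through \eqref{dis_k}--\eqref{dis_p}.

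Finally I would record the consequence: such an $(H',Q',R')$ is observationally indistinguishable from the truth on the basis of the optimal trajectories, so no estimator can single out the correct scalar-equivalence class, which is exactly the asserted non-identifiability and the intention-preserving property. I expect the decisive obstacle to be this middle step rather than the counting: one must verify that $d$ is genuinely the rank threshold separating identifiability from ambiguity, so that falling short of it by even one leaves a non-scalar kernel direction, and that the surplus kernel element can be taken inside the feasible region, i.e. with $H',R'\succ 0$ and $Q'\succeq 0$, so that it represents a legitimate alternative control objective rather than a spurious or infeasible solution. Pinning down the exact bookkeeping that makes $d$ simultaneously the rank required in Theorem~\ref{hqr_iden} and the effective column dimension of $\Phi$ is the crux of the argument.
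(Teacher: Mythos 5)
Your first paragraph is exactly the paper's own justification. The corollary is stated with no separate proof, as an immediate consequence of Theorem \ref{hqr_iden}: the construction in that theorem's proof yields exactly one vector $vec(\mathcal{P}_i(\mathcal{E}_i))$ per horizon index $i$, each living in a space of dimension $d=\frac{mn(n+1)(m+1)}{2}$ (since $\mathcal{P}_i(\mathcal{E}_i)$ is $\frac{m(m+1)}{2}\times n(n+1)$), so with $N<d$ the required $d$ linearly independent vectors cannot exist and the identifiability certificate of Theorem \ref{hqr_iden} is unattainable. That counting argument is the whole of what the paper asserts, and on this point your proposal matches it.

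The program in your second and third paragraphs aims at a strictly stronger statement than the paper proves, and as sketched it would not go through; you correctly sense this is the crux, but the obstacle is worse than bookkeeping. The relations \eqref{pie} are only \emph{necessary} conditions for $K_k=K'_k$: they are produced by taking traces of the matrix identities and by forgetting that the unknown row vector is constrained to the rank-one Kronecker form $\begin{bmatrix} vec(\overline{Q})^T & vec(\overline{H})^T\end{bmatrix}\otimes vec(\overline{R^{-1}})^T$. Hence the set of parameter tuples consistent with the gain data is \emph{not} the kernel of a linear map, and the asserted equivalence ``identifiability up to scale $\Leftrightarrow$ $\ker\Phi$ collapses to the scalar line'' does not hold: an element of the kernel of the stacked trace system need not have the required Kronecker structure, need not reproduce the same gain sequence through \eqref{dis_k}--\eqref{dis_p}, and need not satisfy $H',R'\succ 0$, $Q'\succeq 0$. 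So rank deficiency of that linear system does not by itself manufacture a feasible, observationally indistinguishable alternative $(H',Q',R')$; establishing genuine ambiguity would require an explicit construction that neither you nor the paper provides. (Consistently with this, the paper's own noise-free simulation with $n=m=3$, $N=20<72=d$ reports recovery of $(H,Q,R)$ up to the scalar via the tie-breaking in Problem \ref{qr_pro}, so the corollary is evidently not intended as a constructive impossibility theorem.) The corollary should therefore be read, and proved, exactly as in your first paragraph: when $N<d$ the sufficient condition of Theorem \ref{hqr_iden} can never be met, so accurate identification is never \emph{guaranteed}; nothing stronger is established in the paper, and nothing stronger is needed.
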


Now, with the identifiability guarantee, we introduce our IOC algorithm based on following lemma derived from \eqref{dis_k}.
\begin{lemma}\label{lem_rq_eq}
For $H,Q,R$ in the objective function $J_0$, we have
\begin{equation}\label{rq_eq}
a_i (I_i \otimes R) b_i = c_i \begin{bmatrix} H &0 \\0 & I_{i-1} \otimes Q \end{bmatrix} d_i,
\end{equation}
for $\, i=1,\dots,T$ and
\begin{equation}\label{abcd}
\begin{aligned}
& a_i = [I_n ~~ -B^T K_{N-i+1}^T ~~ -B^T {A_{N-i+1}^c}^T K_{N-i+2}^T ~~ \cdots \\& ~~~~~~ -B^T \prod_{r = 2}^{i-1} {A_{N-r}^c}^T K_{N-1}^T],\\
& c_i = B^T \begin{bmatrix}
\prod_{r = 1}^{i-1} A_{N-r}^c & \prod_{r = 2}^{i-1} A_{N-r}^c & \cdots & A_{N-i+1}^c & I_n
\end{bmatrix},\\
& b_i = \begin{bmatrix}
K_{N-i} \\ K_{N-i+1} A_{N-i}^c \\ \vdots \\ K_{N-1} \prod_{r = 2}^{i} A_{N-r}^c
\end{bmatrix},\, 
d_i = \begin{bmatrix}
\prod_{r = 1}^{i} A_{N-r}^c \\ \prod_{r = 2}^{i} A_{N-r}^c \\ \vdots \\ A_{N-i}^c
\end{bmatrix},
\end{aligned}
\end{equation}
where $K_{N-T:N-1}$ are gain matrices.
\end{lemma}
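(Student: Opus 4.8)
The plan is to eliminate the intermediate Riccati matrices $P_k$ from the gain recursion and read off a relation involving only the weights $H,Q,R$ together with the closed-loop and gain matrices that the algorithm actually estimates. First I would turn the gain equation \eqref{dis_k} into a pointwise optimality identity: left-multiplying \eqref{dis_k} by $(R+B^TP_{k+1}B)$ and moving the $B^TP_{k+1}BK_k$ term across gives
\[
R K_k = B^T P_{k+1}(A - B K_k) = B^T P_{k+1} A^c_k ,
\]
valid for every $k$, where $A^c_k = A-BK_k$. For $k=N-i$ this is exactly the skeleton of \eqref{rq_eq}; what remains is to express $P_{N-i+1}$ through the data.

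Next I would unroll the backward recursion \eqref{dis_p} starting from the terminal value $P_N=H$. Writing $G_N=H$ and $G_j = Q + K_j^T R K_j$ for $j<N$, the recursion reads $P_j = G_j + {A^c_j}^T P_{j+1} A^c_j$, which telescopes; a short induction on the number of steps yields
\[
P_{N-i+1} = \Phi_N^T H \Phi_N + \sum_{j=N-i+1}^{N-1} \Phi_j^T \bigl(Q + K_j^T R K_j\bigr)\Phi_j ,
\]
where $\Phi_j = A^c_{j-1}A^c_{j-2}\cdots A^c_{N-i+1}$ is the ordered product of closed-loop matrices, with the empty product $\Phi_{N-i+1}=I$. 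This telescoping expansion is the only inductive ingredient and is standard.

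Finally I would substitute this expansion into $R K_{N-i} = B^T P_{N-i+1} A^c_{N-i}$ and sort the resulting terms by the weight matrix they carry. The leading term $RK_{N-i}$ together with the $K_j^TRK_j$ contributions collect into a sum of the form $\sum_{s=1}^{i}(\,\cdot\,)R(\,\cdot\,)$, which is precisely the block product $a_i(I_i\otimes R)b_i$; the $H$ term and the $(i-1)$ copies of $Q$ collect into $c_i\,\mathrm{blkdiag}(H,\,I_{i-1}\otimes Q)\,d_i$. Matching the $s$-th block on each side against the definitions in \eqref{abcd} then gives \eqref{rq_eq}. I would anchor the argument with the base case $i=1$, which reduces to $R K_{N-1}=B^T H A^c_{N-1}$ (the optimality identity at $k=N-1$ with $P_N=H$), and observe that incrementing $i$ appends exactly one block column to each of $a_i,b_i,c_i,d_i$. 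The main obstacle is entirely the bookkeeping of non-commuting products: one must keep the ordering of the $A^c_k$ factors and the placement of their transposes consistent when folding the conjugated sum from the $P_{N-i+1}$ expansion into the block matrices, since a reversed product order or a misplaced transpose would break the match even though the scalar pattern of the identity is transparent.
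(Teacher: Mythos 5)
Your proposal is correct and follows essentially the same route as the paper's proof: both start from the rearranged gain equation $RK_k = B^T P_{k+1} A^c_k$ and eliminate $P_{N-i+1}$ by unrolling the backward Riccati recursion from $P_N=H$, then sort the resulting terms by which weight matrix they carry. The only difference is presentational — you write the telescoped expansion of $P_{N-i+1}$ in closed form for general $i$ and match blocks once, whereas the paper carries out the substitution explicitly for $i=1,2,3$ and infers the pattern.
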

\begin{proof}
See the proof in Appendix \ref{pr_lem_rq}.
\end{proof}
\noindent Note that equation \eqref{rq_eq} iterates from $k=N-1$, which is the reason why we require the observations to contain the final state of trajectories in the previous step. 

Set \eqref{rq_eq} as the constraint of our estimation problem and Theorem \ref{hqr_iden} ensures the identifiability of $H,Q,R$ (i.e., as the observations increase, the estimation error of $\hat{K}_k$ decreases and we obtain the real parameters), while we use an additional criteria to guarantee the uniqueness of the solution that estimates must minimize the condition number of the block diagonal matrix consisting of $\hat{H},\hat{Q},\hat{R}$. Supposing we obtain $\hat{K}_{N-T:N-1}$, the optimization problem is formulated as:
\begin{problem}($H,Q,R$ Estimation with Condition Number Minimization)\label{qr_pro}
\begin{equation}\label{lmi}
\begin{aligned}
&(\hat{H}, \hat{Q}, \hat{R}, \hat{\tau}) = \arg \min_{H,Q,R,\tau} \tau^2\\
&~ \mathrm{s.t.} ~~ \eqref{rq_eq}, \, I \preceq \mathrm{diag}(H,Q,R) \preceq \tau I.
\end{aligned}
\end{equation}
\end{problem}
For Problem \ref{qr_pro}, the number of equality constraints and $T\leqslant N$ in \eqref{rq_eq} can be decided by the trade-off between accuracy and computation cost. If the gain matrix estimations are all accurate, the above LMI problem is feasible. Since it is a convex optimization problem with linear constraints, there exists at least one exact solution. However, due to the existence of observation noises, the estimation may contain errors and the problem has no solution (infeasible).
Therefore, We offer a further analysis to determine whether Problem \ref{qr_pro} is solvable. Write formula \eqref{rq_eq} into following expression:
\[
(b_i^T \otimes a_i) \cdot \underbrace{\vphantom{\begin{bmatrix} H & \\ & I_{i-1} \otimes Q \end{bmatrix}}vec(I_i \otimes R)}_{\mathcal{G}^1(R)} = (d_i^T \otimes c_i) \cdot \underbrace{vec(\begin{bmatrix} H & 0\\0 & I_{i-1} \otimes Q \end{bmatrix})}_{\mathcal{G}^2(H,Q)},
\]
for $i = 1,\dots,T$. Define the rows of zero element in the vector $\mathcal{G}^1(R)$ as set $\mathcal{C}^1$, $\mathcal{C}^1 = \{k|\mathcal{G}^1(R)_k = 0\}$ and set $\mathcal{C}^2$ for $\mathcal{G}^2(H,Q)$ similarly. Then we take
\[
\mathcal{K}_i^1 = (b_i^T \otimes a_i)(:,\mathcal{C}^1),\, \mathcal{K}_i^2 = (d_i^T \otimes c_i)(:,\mathcal{C}^2),
\]
with which \eqref{rq_eq} is simplified into
\begin{equation}
\mathcal{K}_i^1 \cdot (\bm{1}_i \otimes vec(R)) = \mathcal{K}_i^2 \cdot \begin{bmatrix}vec(H) \\ \bm{1}_{i-1} \otimes vec(Q)\end{bmatrix}
\end{equation}
Therefore, we can combine all the $T$ equations as
\begin{equation}
\begin{aligned}
&\Tilde{\mathcal{K}}_T^1 \cdot (\bm{1}_T \otimes vec(R)) =\Tilde{\mathcal{K}}_T^2 \cdot \begin{bmatrix}vec(H) \\ \bm{1}_{T-1} \otimes vec(Q)\end{bmatrix}\\
&\Leftrightarrow \underbrace{\vphantom{\begin{bmatrix}1\\1\\1\end{bmatrix}}\begin{bmatrix}\Tilde{\mathcal{K}}_T^1 & -\Tilde{\mathcal{K}}_T^2\end{bmatrix}}_{\Phi_T} \cdot \underbrace{\begin{bmatrix}\bm{1}_T \otimes vec(R) \\ vec(H) \\ \bm{1}_{T-1} \otimes vec(Q)\end{bmatrix}}_{\Theta_T(H,Q,R)} = \begin{bmatrix}
0\\ \vdots \\ 0
\end{bmatrix},
\end{aligned}
\end{equation}
where
\[
\Tilde{\mathcal{K}}_T^1 = \begin{bmatrix}
\mathcal{K}_1^1 & 0 & \cdots & 0\\
\multicolumn{2}{c}{\mathcal{K}_2^1}& \cdots & 0\\
\multicolumn{2}{c}{\ddots} & \vdots &\vdots \\ 
\multicolumn{4}{c}{\mathcal{K}_T^1}
\end{bmatrix}, 
\Tilde{\mathcal{K}}_T^2 =\begin{bmatrix}
\mathcal{K}_1^2 & 0 & \cdots & 0\\
\multicolumn{2}{c}{\mathcal{K}_2^2}& \cdots & 0\\
\multicolumn{2}{c}{\ddots} & \vdots &\vdots \\ 
\multicolumn{4}{c}{\mathcal{K}_T^2} 
\end{bmatrix}.
\]

Based on the above derivation, we provide the following theorem.
\begin{theorem}\label{infeasible}
Problem \ref{qr_pro} is infeasible, when the rank
\begin{equation}
rank(\Phi_T) \geqslant n^2+n+\frac{m^2+m}{2},
\end{equation}
which implies that the equation
\[
\Phi_T \cdot \Theta_T(X_h,X_q,X_r) = \bm{0}
\]
has only zero solution for unknown variables $X_h,X_r\in \mathbb{S}^{n}$ and $X_q\in \mathbb{S}^{m}$.
\end{theorem}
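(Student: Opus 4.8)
The plan is to prove infeasibility in two moves: first show that the rank hypothesis forces the equality constraint \eqref{rq_eq} of Problem \ref{qr_pro} to admit only the zero triple $(H,Q,R)=(0,0,0)$, and then observe that this zero triple is incompatible with the strict lower bound $I\preceq\mathrm{diag}(H,Q,R)$, so the feasible set is empty. Thus the whole argument reduces to the bracketed implication in the statement, namely that $rank(\Phi_T)\geq n^2+n+\tfrac{m^2+m}{2}$ implies the homogeneous system $\Phi_T\Theta_T(X_h,X_q,X_r)=\bm 0$ has only the trivial solution over symmetric matrices.

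First I would make the reduction explicit using the derivation that precedes the theorem, which already rewrites \eqref{rq_eq}, stacked over $i=1,\dots,T$, as $\Phi_T\Theta_T(H,Q,R)=\bm 0$. Since $H,Q\in\mathbb S^{n}$ and $R\in\mathbb S^{m}$, the triple is parametrized by exactly $d:=\tfrac{n(n+1)}{2}+\tfrac{n(n+1)}{2}+\tfrac{m(m+1)}{2}=n^2+n+\tfrac{m^2+m}{2}$ free scalars, which I collect in $z\in\mathbb R^{d}$. The map $z\mapsto\Theta_T$ is linear and injective, because each of $vec(R),vec(H),vec(Q)$ can be read off from a single block of $\Theta_T=[\,\bm 1_T\otimes vec(R);\,vec(H);\,\bm 1_{T-1}\otimes vec(Q)\,]$ and symmetry makes the parametrization faithful; hence I may write $\Theta_T=Ez$ with $E$ of full column rank $d$. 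The constraint then becomes the homogeneous system $\Xi z=\bm 0$ with design matrix $\Xi:=\Phi_T E$ having exactly $d$ columns, so it admits only $z=\bm 0$ iff $\Xi$ has full column rank $d$. Setting $V:=\mathrm{Im}(E)$ (a $d$-dimensional subspace carrying the repeated-block structure), rank-nullity gives $rank(\Xi)=rank(\Phi_T E)=d-\dim\!\big(V\cap\ker\Phi_T\big)$, so the only-trivial-solution property is exactly equivalent to $V\cap\ker\Phi_T=\{\bm 0\}$. Once this is secured, \eqref{rq_eq} forces $(H,Q,R)=(0,0,0)$, which violates $I\preceq\mathrm{diag}(H,Q,R)$, and infeasibility follows.

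The main obstacle is precisely establishing $rank(\Phi_T)\geq d\Rightarrow V\cap\ker\Phi_T=\{\bm 0\}$. A purely generic dimension count is not sufficient here: $\dim\ker\Phi_T=(Tm^2+Tn^2)-rank(\Phi_T)$ and $\dim V=d$ need not meet trivially merely because $d+\dim\ker\Phi_T\le Tm^2+Tn^2$, since $V$ is a highly structured (non-generic) subspace. To close this gap I would exploit the stagewise, telescoping coupling between successive gain matrices $K_{N-i},\dots,K_{N-1}$ that is embedded in the blocks $a_i,b_i,c_i,d_i$ of \eqref{abcd} and in the block-triangular form of $\tilde{\mathcal K}^1_T$ and $\tilde{\mathcal K}^2_T$: the aim is to show that any nonzero symmetric triple whose image lands in $\ker\Phi_T$ would necessarily collapse a row block of $\Phi_T$ and pull $rank(\Phi_T)$ strictly below $d$, i.e., that when the rank attains $d$ the $d$ free parameters of $(H,Q,R)$ are independently resolved by $\Phi_T$. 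Making this faithful-coupling argument rigorous—rather than appealing to general position—is the crux; the remaining steps are routine linear algebra and the definiteness-constraint contradiction.
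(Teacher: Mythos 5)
Your reduction is the right one and matches what the paper intends: if the homogeneous system $\Phi_T\,\Theta_T(H,Q,R)=\bm{0}$ admits only the zero triple over symmetric matrices, then the equality constraint \eqref{rq_eq} of Problem \ref{qr_pro} forces $\mathrm{diag}(H,Q,R)=0$, which contradicts $I\preceq\mathrm{diag}(H,Q,R)$, so the feasible set is empty; and your count $d=n(n+1)+\tfrac{m(m+1)}{2}=n^2+n+\tfrac{m^2+m}{2}$ of free parameters is exactly where the threshold comes from. (Incidentally, the statement's typing $X_r\in\mathbb{S}^n$, $X_q\in\mathbb{S}^m$ is a typo; the count is only consistent with $X_h,X_q\in\mathbb{S}^n$, $X_r\in\mathbb{S}^m$.) However, you yourself flag, and do not close, the step that carries all the content: showing that $rank(\Phi_T)\geqslant d$ forces $\ker\Phi_T\cap V=\{\bm{0}\}$, where $V=\mathrm{Im}(E)$ is the structured subspace of admissible vectors $\Theta_T$. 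As you correctly observe, this is not a consequence of a dimension count: $\Theta_T$ lives in a space of dimension far larger than $d$, and a matrix of rank $d$ can have a kernel meeting a $d$-dimensional structured subspace nontrivially. What is actually needed is full column rank of the reduced matrix $\Phi_T E$, and the hypothesis $rank(\Phi_T)\geqslant d$ does not deliver that without an additional structural or genericity argument. So the proposal is incomplete at its crux.

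For what it is worth, the paper itself offers no proof of this theorem: it is asserted immediately after the derivation of $\Phi_T$ and $\Theta_T$, with the only-zero-solution property treated as if it were synonymous with the rank condition, and the simulation section applies it by comparing the row count $9T$ of $\Phi_T$ against $18$. Your attempt is therefore more careful than the source; you have correctly isolated the implication the paper leaves unargued. To make the result rigorous one should either (i) restate the hypothesis as $rank(\Phi_T E)=d$, i.e., full column rank of the system in the reduced symmetric parametrization, from which infeasibility follows immediately by your contradiction with the LMI lower bound, or (ii) prove that for noisy gain estimates $\hat{K}_k$ the blocks $a_i,b_i,c_i,d_i$ place $\ker\Phi_T$ in general position with respect to $V$ --- a genericity claim rather than a theorem about the rank of $\Phi_T$ alone.
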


Therefore, we consider the following optimization problem instead when Problem \ref{qr_pro} is infeasible:
\begin{equation}\label{qp_pro}
\begin{aligned}
& \min_{\hat{H}, \hat{Q}, \hat{R}} ~\Vert \Phi_T  \cdot \Theta_T(\hat{H}, \hat{Q}, \hat{R})\Vert_2^2 \\
& ~~s.t.~~~\Vert \Theta_T(\hat{H}, \hat{Q}, \hat{R}) \Vert = 1,
\end{aligned}
\end{equation}
which can be solved by existing QP solvers \cite{stellato2020osqp}.

\section{Control Horizon Estimation and LQR Problem Reconstruction}\label{horizon}
Note that with the target state and objective function, we can generate a satisfying control trajectory. However, in order to imitate the agent's motion precisely, we still need to estimate the control horizon, since different control horizons lead to different inputs and final states. 

In this section, we will first introduce the control horizon estimation algorithm, then reconstruct the LQR problem with estimates and provide the input prediction application.
\subsection{Control Horizon Estimation}
Suppose $R_m$ is now driving under an optimal trajectory generated by $\mathbf{P}_0$ toward the target. To estimate the control horizon $N$ of this trajectory, we need an $l$ length continuous observation 
\begin{equation}\label{y_observe}
\mathcal{Y} = \{y_0, y_1,\dots, y_l\}.
\end{equation}
We build the following optimization problem for estimation.
\begin{problem}(Estimation of the Control Horizon)\label{pro_n}
\begin{subequations}
\begin{eqnarray}\nonumber
&&\min_{\hat{N}} ~~\sum_{i = 1}^l \Vert y_i - C x_i \Vert^2 : = J_N(\hat{N};y_{0:l})\\ \nonumber
&&~\mathrm{s.t.} ~~ {x}_{k+1} = A {x}_k + B {u}_k,\\ \nonumber &&~~~~~~~ \,u_i = -K_i x_i,\,x_0 = y_0- \hat{x}_T,\\ \nonumber
&&~~~~~~~ K_i =  (\hat{R} + B^T P_{i+1} B)^{-1} B^T P_{i+1} A, \\ \nonumber
&&~~~~~~~ P_i =K_i^T \hat{R} K_i + {A^c_i}^T P_{i+1} A^c_i+\hat{Q}, \,P_{\hat{N}} = \hat{H},\\\nonumber
&&~~~~~~~
i = 0,1,\dots,\hat{N}-1,
\end{eqnarray}
\end{subequations}
where $y_{1:l}$ is the observation of $R_m$ up to time $k=l$.
\end{problem}
The above problem reflects that the deviation of the trajectory obtained under the optimal solution $\hat{N}^*$ from the observed data $y_{1:l}$ is the smallest. Since the optimization variable $\hat{N} \in \mathbb{N}_+$ does not explicitly exist in the objective and constraints, the problem is a non-convex optimization on the set of positive integers and hard to solve directly. Therefore, we turn to investigate how the value of the objective function changes with $\hat{N}$. Note that $N > l$ and as $\hat{N}$ grows from $l$ to the real horizon $N$, the function $J_N$ gradually decreases to the minimum point. Then, as the continue growth of $\hat{N}$, $J_N$ increases and finally converges to a fixed value $\sum_{i = 1}^l \Vert y_i\Vert^2$. See an illustration in Fig. \ref{n_est} of Section \ref{sim}.

According to the analysis of $J_N$, to obtain the solution of Problem \ref{pro_n}, we can start traversing from $\hat{N} = l$ and keep increasing $\hat{N}$ until $J_N$ no longer decreases, at which time the optimal $\hat{N}^*$ is found. However, this will consume lots of computation if $N \gg l$. Therefore, we propose an algorithm based on binary search to find the optimal solution inspired by the line search of gradient decent method. Since $J_N$ is a discrete function with respect to $\hat{N}$, we use the function values at both $\hat{N}$ and $\hat{N}+1$ to approximate the gradient at point $\hat{N}$, which is given by:
\begin{equation}\label{gradient}
g_N = \frac{J_N(\hat{N}+1;y_{0:l})-J_N(\hat{N};y_{0:l})}{(\hat{N}+1)-\hat{N}}.
\end{equation}
Thus, if we have $g_N < 0$, then $\hat{N} < \hat{N}^*$; if $g_N>0$, then $\hat{N} \geqslant \hat{N}^*$.
The detailed algorithm is shown in Algorithm \ref{alg1}.

\begin{algorithm2e}\label{alg1}
\SetAlgoLined
 \caption{Binary Search for Optimal $\hat{N}$}
    \KwIn{ 
    The observation trajectory and the observation times, $y_{0:l}, l$;
      The estimate of the target state, $\hat{x}_T$;
      The parameters of the system dynamic, $A,B,C$;
      The estimate of the objective function parameter, $\hat{H},\hat{Q},\hat{R}$; The step length, $\theta$;}
    \KwOut{
      The optimal control horizon estimate, $\hat{N}^*$;}
    \textbf{Determine the initial bound:}\\
    Set the lower bound as $N^{-} = l+1$;
    Let $\hat{N}' = l + \theta$;\\
    \lWhile{$g_{\hat{N}'} < 0$}{
    $\hat{N}' = \hat{N}' + \theta$;
    }
    Set the upper bound as $N^+ = \hat{N}'$;\\
    \textbf{Binary search:}\\
    Take the midpoint of the range as $\Tilde{N} = \lfloor{\frac{N^- + N^+}{2}}\rfloor$;\\
    \While{$N^+ - N^- > 1$}{
    \lIf{$g_{\Tilde{N}} > 0$} {Set $N^+ = \Tilde{N}$}
    \lElse{Set $N^- = \Tilde{N}$}
    }
    $\hat{N}^* = \mathop{\arg\min}\limits_{\hat{N}} \{J_N(\hat{N};y_{0:l}); \hat{N} \in \{N^+,N^-\} \}$;\\
\textbf{return} The control horizon estimation $\hat{N}^*$.
\end{algorithm2e} 

\begin{remark}
Notice that after determining the initial range $[N^-,N^+]$, if we simply leverage the function values of the two break points $N_1,N_2,N^-\leqslant N_1<N_2 \leqslant N^+$ to find the optimum, to have a constant compressive ratio $c$, the break points should satisfy $\frac{N_2-N^-}{N^+ - N^-}= \frac{N^+ - N_1}{N^+ - N^-} = c$ and $c = \frac{\sqrt{5}-1}{2} \approx 0.618$. However, with the above Algorithm \ref{alg1}, since we approximate the gradient, the compressive ratio is improved to $c = 0.5$. Moreover, in order to reduce the computation cost, it is recommended to store the result each time we calculate the deviation sum $J_N$ corresponding to a certain $\hat{N}$ to avoid repeated calculation.
\end{remark}

\subsection{LQR Reconstruction}
Now we have obtained the estimate of the target state $\hat{x}_T$, the control horizon $\hat{N}^*$, and identified the weighting matrices in objective function $J_0$. Therefore, we can reconstruct the optimization problem by substituting $x_T,H,Q,R,N$ in $\mathbf{P}_0$ with our estimates and calculate the control law $\hat{K}_{0:\hat{N}-1}$ through the iteration \eqref{dis_k} and \eqref{dis_p}.

We provide the future input prediction as one important application of our control law learning algorithm.
Now suppose at time $k = l$, the agent $R_o$ has observed $R_m$ for $l+1$ steps and obtained a series of observations $\mathcal{Y}$ as \eqref{y_observe}.
Denote $u_l$ as the real input at $k=l$ that we want to predict and $\hat{u}_{l|l}$ as our input inference. $R_o$ tries to infer the current control input accurately, which is to minimize the error between $\hat{u}_{l|l}$ and $u_l$. From \eqref{eq:uk}, $u_l$ is calculated by $-K_l x_l$, then we have
\begin{equation}
\min \Vert u_l - \hat{u}_{l|l} \Vert^2 = \Vert -K_l x_l - \hat{u}_{l|l}\Vert^2,
\end{equation}
Notice that the state estimate $\hat{x}_{l|l}$ can be obtained by Kalman filter. 
Therefore, we can infer the control input of the target agent $R_m$ at time $l$ by 
\begin{equation}
\hat{{u}}_{l|l} = \mu_0,
\end{equation}
where $\mu_0$ is calculated by solving the following reconstructed optimization problem:
\begin{problem}(Control Input Prediction)\label{p1}
\begin{subequations}
\begin{eqnarray}\nonumber
&&\min_{\mu_{0:N'-1}} J' = x_{N'}^T \hat{H} x_{N'} + \sum_{k = 0}^{N'-1} (x_k^T \hat{Q} x_k+\mu_k^T \hat{R} \mu_k)\\ 
&&~~~\mathrm{s.t.} ~~\eqref{sys}, x_0 = \hat{x}_{l|l} - \hat{x}_T,\nonumber
\end{eqnarray}
\end{subequations}
in which $N' = \hat{N}^*-l$.
\end{problem}
\noindent The solution of this reconstructed LQR problem is given in \cite{kwakernaak1972maximally} as
\begin{equation}\label{mu0}
\mu_0 = - K_0 x_0,
\end{equation}
where $K_0$ is calculated by \eqref{dis_k}.
\begin{remark}
According to the principle of optimality \cite{bellman1966dynamic} in the dynamic programming, if a control policy $p_{0,N}^*$ is optimal for the initial point $x_0$, then for any $l\in \{1,2,\cdots,N-1\}$, its sub-policy $p_{l,N}^*$ is also optimal to the subprocess containing last $N-l+1$ steps with the initial point $x_l$.
\end{remark}

\begin{itemize}[leftmargin=*]
\item {\textbf{Prediction Error Analysis}}
\end{itemize}

Note that when the trajectory sample size $M \rightarrow \infty$, we have $\hat{x}_T = x_T$ and $\hat{H} = H, \hat{R} = R,\hat{Q}=Q$ according to the law of large numbers. However, the accuracy of estimation $\hat{N}^*$ can not be guaranteed, since it is the value that minimize function $J_N$, which is affected by the observation errors. Therefore, the sensitivity of $\hat{N}^*$ with respect to the input estimation $\hat{{u}}_{l|l}$ needs to be analyzed. 

It is found that the influence of $\hat{N}^*$ is reflected in the calculation of $K_0$ with Problem \ref{p1} and formula \eqref{mu0}.
Now suppose the real control horizon of the system $N$ generates $K_0^r$, while our estimate $\hat{N}^*$ calculates $\hat{K}_0$. We will show the estimation error $\Vert \hat{K}_0 - K_{0}^{r} \Vert$ can be bounded and controlled in the following analysis.

Note that a discrete-time LQR problem with finite control horizon as $\mathbf{P}_0$ is solved through dynamic programming method and the iteration equation of the intermediate parameter $P_k$ can be described as
\begin{equation}\nonumber
P_{k-1} = A^TP_{k}A - A^T P_{k} B(R+ B^T P_{k}B)^{-1} B^TP_{k}A+Q,
\end{equation}
for $k=1,\dots,N$ with $P_N = H>0$. According to \cite{bitmead1991riccati}, when $N \xrightarrow{} \infty$, there is $P_0=P^*>0$, where $P^*$ satisfies the discrete Riccati equation:
\begin{equation}
P^* = A^TP^*A - A^T P^* B(R+ B^T P^*B)^{-1} B^TP^*A+Q,
\end{equation}
and correspondingly,
\[
K_0 =K^*= ({R} + B^T P^* B)^{-1} B^T P^* A.
\]
What's more, the sequence $\{P_k\}$ is monotonic (In our analysis, a matrix sequence is monotonic means $P_0  \lesseqgtr P_1 \lesseqgtr \dots \lesseqgtr P_N$, where $P_i \geqslant P_j$ implies $P_i - P_j$ is a positive semi-definite matrix).
According to Lemma \ref{k_equal}, the difference between the ends of two trajectories with control horizons $N_1,N_2, N_1\leqslant N_2$ can be converted to a comparison inside a single sequence generated by $N_2$, which is $\Vert K_{0}^{(1)} - K_{0}^{(2)} \Vert = \Vert K_{N_2-N_1}^{(2)} - K_{0}^{(2)} \Vert$.
Therefore, we have the estimation error 
\[
\Vert \hat{K}_0 - K_{0}^{r} \Vert = \Vert K_{|\hat{N}^* - N|} - K_{0} \Vert, 
\]
where $K_{0:N_m}$ is generated under the horizon $N_m = \max(N,\hat{N}^*)$. Then we will focus on the convergence of the $\{K_k\}$ sequence under fixed $H,Q,R$.

We offer the following theorem to show the input prediction error and sensitivity of the control horizon estimate.

\begin{theorem}\label{thm_sens}
There exists a positive integer $\overline{N}\in \mathbb{N}_+$ and $\eta > 0$, which can be set as the maximum tolerable inference error. For any $N> \overline{N}$, we have $\Vert \mu_0^{(N+\delta N)} - \mu_0^{(N)} \Vert \leqslant \eta$, where $\mu_0^{(N)}$ denotes the input inference when the control horizon estimate $\hat{N}^*=N$ and $\eta$ is proportional to $\delta N \in \mathbb{N}_+ $.
\end{theorem}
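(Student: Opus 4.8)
The plan is to reduce the input-inference difference to a difference of two finite-horizon gains drawn from a single Riccati sequence, and then exploit the geometric convergence of that sequence to its infinite-horizon fixed point. Since $\mu_0^{(N)} = -K_0^{(N)} x_0$ by \eqref{mu0}, I would first write
\[
\mu_0^{(N+\delta N)} - \mu_0^{(N)} = -\bigl(K_0^{(N+\delta N)} - K_0^{(N)}\bigr) x_0,
\]
and invoke Lemma \ref{k_equal} to identify $K_0^{(N)}$ with the gain $K_{\delta N}^{(N+\delta N)}$ sitting at time index $\delta N$ of the longer horizon-$(N+\delta N)$ sequence. This turns the cross-horizon comparison into the intra-sequence comparison $\Vert K_0^{(N+\delta N)} - K_{\delta N}^{(N+\delta N)}\Vert$, exactly the reduction already sketched before the statement, so the whole argument lives inside one backward Riccati recursion.

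Second, I would establish geometric convergence of the Riccati iterates. Writing $\tilde K_j$ for the gain with $j$ steps to go (i.e.\ the gain after $j$ backward steps of \eqref{dis_p} started from $P=\hat H$), the stabilizability of $(A,B)$ together with $\rho(A^c)<1$ guarantees convergence $\tilde K_j \to K^*$ to the DARE gain, and moreover a rate $\Vert \tilde K_j - K^*\Vert \leq c_1 \rho^{\,j}$ for some $\rho \in (0,1)$ (for instance via the closed-loop contraction estimate in \cite{bitmead1991riccati}; the monotonicity of $\{P_k\}$ noted above secures convergence, and $\rho(A^c)<1$ upgrades it to a geometric rate). Because the gain is a locally Lipschitz function $P \mapsto (R+B^TPB)^{-1}B^TPA$ of $P$ near $P^*$, this transfers by the triangle inequality to a bound on consecutive iterates, $\Vert \tilde K_{j+1} - \tilde K_j\Vert \leq c_2 \rho^{\,j}$.

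Third, I would telescope over the last $\delta N$ iterations. Since $K_0^{(N+\delta N)} = \tilde K_{N+\delta N}$ and $K_{\delta N}^{(N+\delta N)} = \tilde K_N$, we obtain
\[
\Vert K_0^{(N+\delta N)} - K_{\delta N}^{(N+\delta N)}\Vert
\leq \sum_{j=N}^{N+\delta N-1} \Vert \tilde K_{j+1} - \tilde K_j\Vert
\leq \delta N \cdot c_2 \rho^{\,N},
\]
so that $\Vert \mu_0^{(N+\delta N)} - \mu_0^{(N)}\Vert \leq c_2 \rho^{\,N}\Vert x_0\Vert\,\delta N =: \eta$. This $\eta$ is manifestly proportional to $\delta N$, and choosing $\overline N$ so that $c_2\rho^{\,\overline N}\Vert x_0\Vert$ falls below the prescribed per-step tolerance makes $\eta$ no larger than the maximum tolerable inference error for every $N > \overline N$, which is the claim.

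The main obstacle is the second step: pinning down the geometric rate of the Riccati recursion and the companion Lipschitz bound for the gain map. Monotonicity alone only yields convergence; obtaining the factor $\rho^{\,j}$ requires the contraction behavior of the Riccati operator near its fixed point, which I would justify through the stabilizing closed-loop matrix $A^c = A - BK^*$ with $\rho(A^c)<1$ and a standard perturbation estimate for $(R+B^TPB)^{-1}$ around $P=P^*$. Everything else is a triangle-inequality telescoping and a threshold choice for $\overline N$.
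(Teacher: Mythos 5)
Your proposal is correct and follows essentially the same route as the paper: reduce the cross-horizon comparison to an intra-sequence one via Lemma \ref{k_equal}, bound consecutive gain differences through the convergence of the backward Riccati iterates, and telescope over the last $\delta N$ steps to obtain $\eta \propto \delta N$. The one ingredient you defer as ``the main obstacle'' --- the per-step contraction rate --- is exactly where the paper invests its effort, supplying it via Lemma \ref{p_conv} (Proposition 7 of \cite{cai2017convergent}) together with a separate construction of the constant $\gamma$ to cover the $Q=0$ case, which your generic appeal to geometric convergence of the Riccati recursion would not handle since $Q$ is only assumed positive semi-definite.
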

\begin{proof}
See the proof in Appendix \ref{prf_sens}.
\end{proof}

The complete algorithm flow is shown in Algorithm \ref{alg2}.

\begin{algorithm2e}\label{alg2}
 \caption{LQR Reconstruction based Control Inputs Prediction Algorithm}
    \KwIn{ The observation data including $M$ history trajectories, $\{\mathcal{Y}^{1:M}\}$;
    The observation of current trajectory and the observation times, $\mathcal{Y}, l$;
    The system dynamic function, $A,B,C$;}
    \KwOut{
      The control input inference at time $l$, $\hat{u}_{l|l}$;}
    Estimate the target state $\hat{x}_T$ through curve fitting and calculating the intersection;\\
    \uIf{consider only final-state}
    {Identify the objective function parameter $\hat{R}$ by solving Problem \ref{prob_r} with $M$ trajectories;}
    \uElseIf{consider process-states}{
    Estimate the feedback gain matrices;\\
    Identify $\hat{H},\hat{Q},\hat{R}$ by solving Problem \ref{qr_pro};}
    \lElse{Return false}
    Calculate the optimal estimate to control horizon $\hat{N}^*$ with Algorithm \ref{alg1};\\
    Formulate and solve Problem \ref{p1} with previous estimates; Obtain the one-step input $\mu_0$ with \eqref{mu0} in the forward pass;\\
    \textbf{return} The control input prediction $\hat{u}_{l|l} = \mu_0$.
\end{algorithm2e}

\section{Simulation Results}\label{sim}
In this section, we conduct multiple simulations on our algorithm and apply it to the future input and trajectory prediction to show the performance and efficiency.

Consider a controllable linear system modeled by a three-dimensional dynamical function as \eqref{sys} and
\begin{equation}\label{sim_sys}
\begin{aligned}
& A = \begin{bmatrix}
    1.4155 &  -0.0876 &   0.7213\\
    0.8186 &   2.7338 &  -1.2750\\
   -0.3118 &  -0.7573 &  1.2008
\end{bmatrix} ,\\
& B = \begin{bmatrix}
    -0.0484 &   0.1611 &  -1.8972\\
   -1.1350 &   1.6600  &  0.1003\\
    0.3905 &  -0.7851  &  0.1055\\
\end{bmatrix},
\end{aligned}
\end{equation}
are generated randomly where $n = m =3$.
Assume the agent is driving to the target state $\begin{bmatrix}
    6,8,4
\end{bmatrix}^T$. The observation function is \eqref{ob_fun} with $C=I_3$
and the observation noise satisfies a Gaussian distribution $\mathcal{N}(0,0.02^2)$.
We now obtain a set of trajectory observation. Through applying the external incentives and the line fitting, we calculate the intersection point as the estimation to the target state \[\hat{x}_T = \begin{bmatrix}
    6.063,8.086,4.039
\end{bmatrix}^T.\]

\subsection{Final-state Only Setting}

Firstly, we consider the control optimization problem $\mathbf{P}_0$ with final-state setting (as $J_1$). We suppose
\[
H = I_{3\times 3}, R = \begin{bmatrix}
    0.4 I_{2\times 2} & 0_{2\times 1} \\
    0_{1\times 2} & 0.8
\end{bmatrix}.
\]
To test the estimation algorithm for $R$, we set the trajectory length $l_j=10$ and random initial states $x_0^j$ for all $j = 1,2,\cdots,M$ to ensure the linear independence. Use the MATLAB function $fmincon$ with ``interior-point" method to solve Problem \ref{prob_r}.
We pick the Frobenius norm to measure the estimation error:
\[
err(\hat{R}) = \frac{\Vert \hat{R}-R \Vert_F}{\Vert R \Vert_F}.
\]
The results are shown in Fig. \ref{r_est} and the time costs are listed in Table.\ref{tab1}. Notice that $R$ is a three-dimensional matrix, more than three trajectories are required to solve for a unique $\hat{R}$. We can see that as the number of trajectories $M$ increases from $3$ to $13$, the estimation error shows a decreasing trend. However, as $M$ becomes larger, the search space of the problem increases as well and we need to continuously enlarge the parameter $MaxfunEvals$ of function $fmincon$ to ensure an accurate solution. The lager number of iterations leads to a longer solving time. Therefore, considering the trade-off between the estimation error and the computational efficiency, we choose $M=7$ and the estimation value 
\[
\hat{R} = \begin{bmatrix}
    0.4193 &  -0.0078 &  -0.0063\\
   -0.0026 &   0.4083 &  -0.0071\\
   -0.0089 &  -0.0111  &  0.8166\\
\end{bmatrix}.
\]

\begin{figure}[t]
\centering
\includegraphics[width=0.35\textwidth]{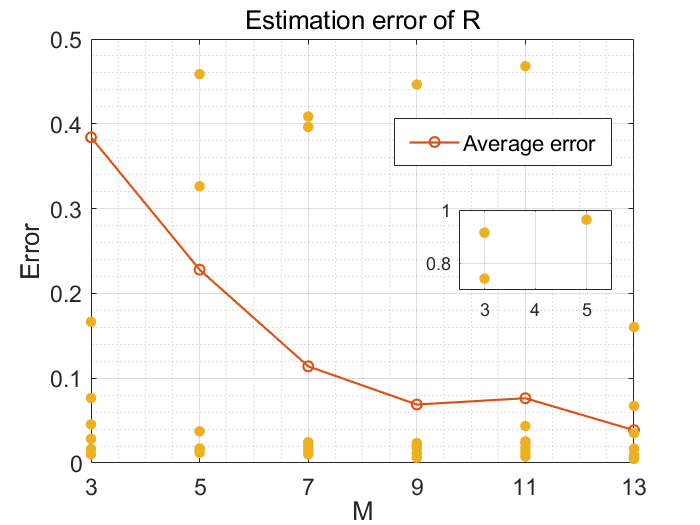}
\caption{Estimation results of the parameter $R$ under only final-state setting. We launch the simulation $8$ times at each $M$. The yellow dots are estimation errors of each simulation and the red curve represents the change of the average error from $M=3$ to $13$.}
\label{r_est}
\end{figure}

\begin{table}
    \centering
    \caption{Time cost of $R$ estimation}
	\label{tab1}
    \begin{tabular}{cccccc}
    \toprule    
    Amount of Data $M$ & 2 & 4 & 6 & 8 & 10 \\    
    \midrule   
    Time Cost (s) & 0.5436 & 0.7829 & 1.071 & 1.431 & 1.806\\
    \bottomrule   
    \end{tabular}
\end{table}

\subsection{Classic LQR Setting}

\begin{figure}[t]
\centering
\includegraphics[width=0.35\textwidth]{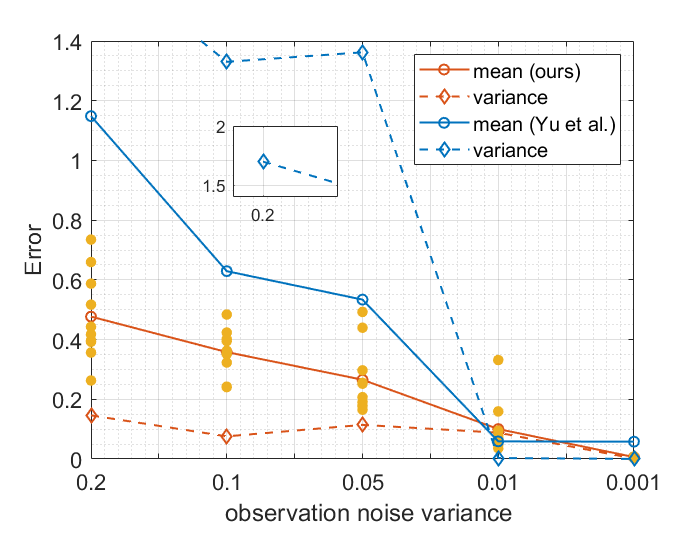}
\vspace{-4pt}
\caption{Estimation results of the parameters $H,Q,R$. We run experiments for $10$ times at different variances of observation noise from $0.2$ to $0.001$. The red lines show estimation errors of our algorithm with yellow dots representing each simulation results and blue lines are results of method proposed in \cite{yu2021system} ($H=Q$), where the solid line is mean of the errors and the dashed line is the standard deviation.}
\label{hqr_est}
\end{figure}
Now we consider the more complex objective function setting as $J_0$ in \eqref{hqr_j}. We set parameters
\[
H = I_{3\times 3},\, Q = 0.2 I_{3\times 3} ,\,R = \begin{bmatrix}
    0.4 I_{2\times 2} & 0_{2\times 1} \\
    0_{1\times 2} & 0.8
\end{bmatrix}
\]
and estimate them simultaneously by the algorithm in Sec. \ref{ioc}-B. We first collect $M$ trajectories containing their terminal states and compute the feedback matrix $K_k$ at each step using the proposed estimator as \eqref{estimator}. Note that when there is no observation noise, Problem \ref{qr_pro} is always feasible with any $T$ since $rank(\Phi_T) = 9 < n^2+n+\frac{m^2+m}{2} = 18$ according to Theorem \ref{infeasible}, and the solutions are accurately equal to the real $H,Q,R$. However, if the observation noise exists, in this case Problem \ref{qr_pro} is only feasible at $T=1$ ($rank(\Phi_1) = 9<18$). For $T \geqslant 2$, we have $rank(\Phi_1) = 9 \cdot T \geqslant 18$ which needs to transfer into the QP problem as \eqref{qp_pro}. 

Here we set $T = 6$ and solve Problem \ref{qr_pro} and \eqref{qp_pro} with solver YALMIP \cite{lofberg2004yalmip} and SeDuMi \cite{sturm1999using} in MATLAB. The estimation errors are shown in Fig. \ref{hqr_est}. We use the Frobenius norm to measure the estimation error:
\[
err = \frac{\Vert \begin{bmatrix}\hat{H}&\hat{Q}&\hat{R}\end{bmatrix}-\begin{bmatrix}H & Q & R\end{bmatrix} \cdot \alpha \Vert_F}{\Vert \begin{bmatrix}H & Q & R\end{bmatrix} \cdot \alpha \Vert_F}.
\]
With the trajectory number fixed, as observation noises decrease, the estimate to $\hat{K}_k$ becomes more accurate and estimation errors of $\hat{H},\hat{Q},\hat{R}$ gradually decrease. 
We compare with the IOC algorithm proposed in \cite{yu2021system} (we set $H=Q$ here since \cite{yu2021system} estimates $Q,R$ only) and find that when the observation noise is small enough, the estimation errors obtained by two algorithms are basically the same, while when the noise is not negligible, ours works better. What's more, comparing the variance of multiple simulation results, our algorithm is overall smaller and more stable than theirs.
Finally we take the following estimates:
\begin{equation}\nonumber
\begin{aligned}
&\hat{H} = \begin{bmatrix}
5.0699 &  -0.1320 &  -0.1507\\
   -0.1320 &   4.9445  & -0.0058\\
   -0.1507 &  -0.0058  &  4.9836
\end{bmatrix}, \\
&\hat{Q} = \begin{bmatrix}
1.0069  &  0.0097  &  0.0045\\
    0.0097 &   1.0137  & 0.0064\\
    0.0045  &  0.0064 &   1.0030
\end{bmatrix}, \\
&\hat{R} = \begin{bmatrix}
1.9926 &  -0.0791  & -0.1309\\
   -0.0791  &  1.9727  & -0.0556\\
   -0.1309 &  -0.0556  &  3.8733
\end{bmatrix}
\end{aligned}
\end{equation}
with the multiplied scalar $\alpha = 5$.

Now we start to infer the inputs and predict the future states of the agent. Observe for $l =15$ times and set $\theta = 10$. With Algorithm \ref{alg1} and the property of $J_N$ curve in Fig. \ref{n_est}, we obtain the optimal control horizon estimation $\hat{N}^* = 20$. At this point, we have completed the estimation of $\hat{x}_T, \hat{R}$ and $\hat{N}^*$. The current state is estimated by Kalman filter as $\hat{x}_{l|l}$. Then, we reconstruct and solve the control optimization problem of the mobile agent as Problem \ref{p1}:
\begin{equation}\nonumber
\begin{aligned}
&\min_{\mu_{0:4}}~~ x_{5}^T \hat{H} x_{5} + \sum_{k = 0}^{4} (\mu_k^T \hat{R} \mu_k + x_k^T \hat{Q} x_k) \\ &~~\mathrm{s.t.} ~~\eqref{sys}, x_0 = \hat{x}_{l|l} - \hat{x}_T, k = 0,\dots,4.
\end{aligned}
\end{equation}
Note that the solution $x_{1:5}$ of above problem are actually the prediction to the future state of the agent from $k=16$ to $20$. Denote the prediction error as $\Vert \hat{x} - x\Vert$. Comparing with the prediction through polynomial regression \cite{chen2016tracking} in presence of the same observation noise distribution, the results are shown in Table.\ref{tab2}. The curve fitting is based on all the $l=15$ history states and the highest polynomial order is chosen as $3$ which is the optimal. We can see that the prediction error generated by our methods is overall lower than the fitting methods. Moreover, the error by curve fitting grows larger with time $k$ while the error of our method decreases as $x_k$ goes to $0$.

\begin{figure}[t]
\centering
\includegraphics[width=0.35\textwidth]{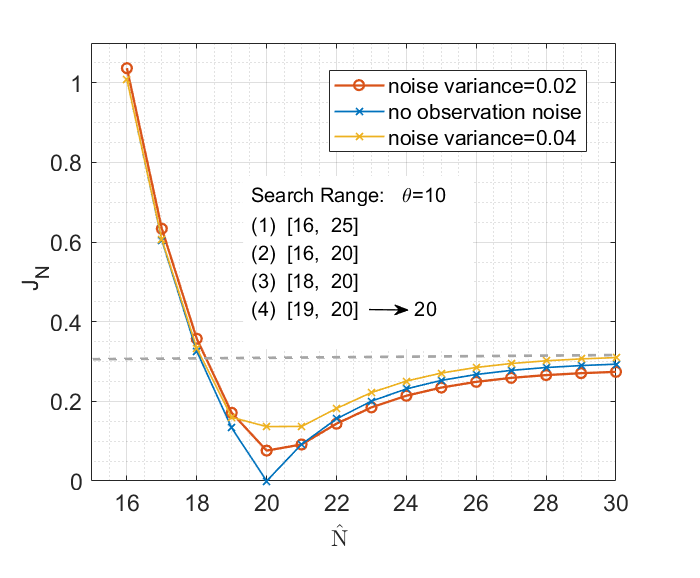}
\vspace{-6pt}
\caption{The function value $J_N$ of different estimation $\hat{N}$. Set the observation number $l = 15$ and the real control horizon $N=20$. The red curve shows the change of function $J_N$ with $\hat{N}$ when the observation noise variance equal to $0.02$ (our simulation setting). We can see that $J_N$ decreases at the beginning and reaches the minimum value at $20$. Then it gradually increases and converges to $\sum_{i = 1}^{15} \Vert y_i\Vert^2$. Note that as the noise variance goes larger, the minimum point of $J_N$ may deviate from $20$ as the yellow curve and Theorem \ref{thm_sens} is proposed for analyzing the error's bound.}
\label{n_est}
\end{figure}

\begin{table}
    \centering
    \caption{States prediction error comparison}
	\label{tab2}
    \begin{tabular}{cccccc}
    \toprule    
    Step $k$ & 16 & 17 & 18 & 19 & 20 \\    
    \midrule   
    \textbf{Ours} & 0.0365 & 0.0271 & 0.0184 & 0.0106 & 0.0048\\
    \cite{chen2016tracking} & 0.0543 & 0.1105 & 0.2006 & 0.3378 & 0.5401\\
    \bottomrule   
    \end{tabular}
\end{table}

\section{Experiments}\label{exp}
\begin{figure}[t]
\centering
\includegraphics[width=0.32\textwidth]{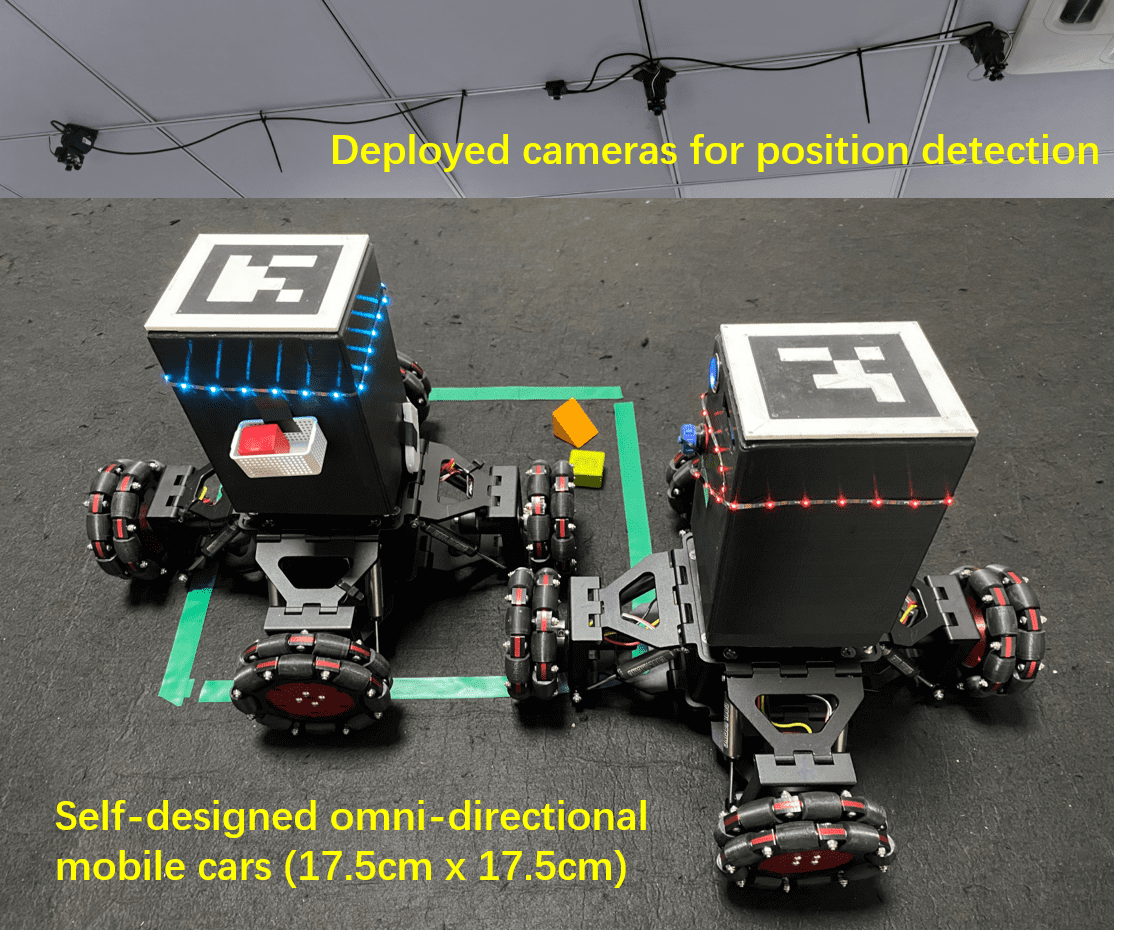}
\vspace{9pt}
\caption{The self-designed platform. Each omni-directional mobile agent has a unique QR code on its top for the position detection. A high-precision camera array is deployed on the ceiling.}
\label{plat}
\end{figure}

\begin{figure}[t]
\centering
\includegraphics[width=0.45\textwidth]{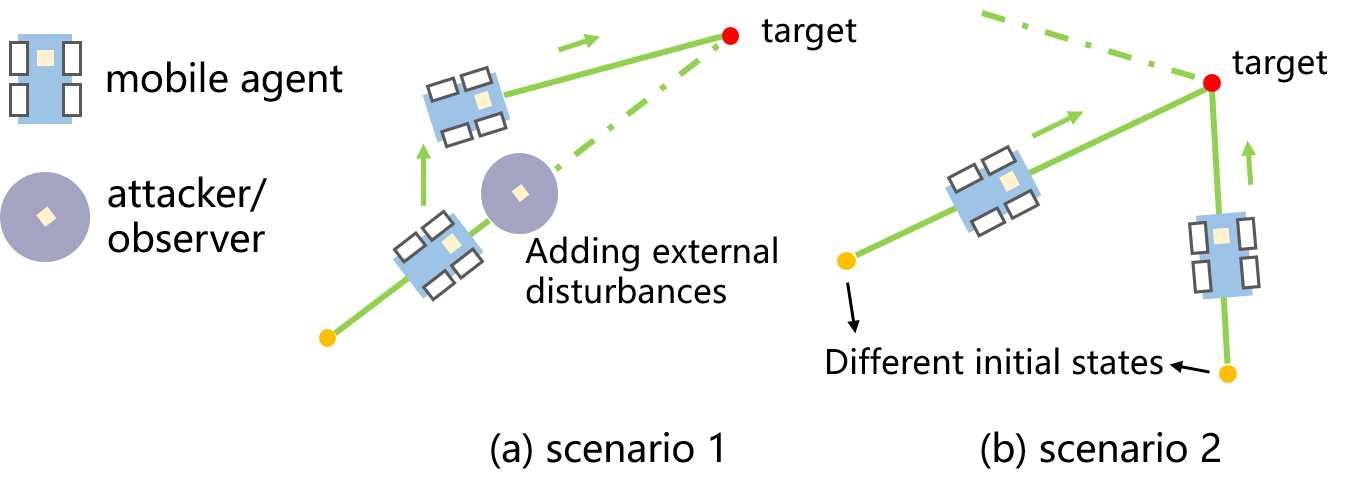}
\caption{Two scenarios of data collection process. Scenario 1 leverages the mechanism that the agent will re-calculate the trajectory after receiving disturbances, while scenario 2 just observes and records different trajectories from different initial point.}
\label{data_col}
\end{figure}
We demonstrate the algorithm on our self-designed mobile robot platform \cite{ding2021robopheus} as Fig. \ref{plat}. The AprilTag visual system is adopted for the real-time localization of the robots. The control procedures based on the localization results are implemented by MATLAB in a VMWare ESXI virtual machine, which is equipped with an Intel(R) Xeon(R) Gold 5220R CPU, 2.20G Hz processor and 16GB RAM. All experiments are conducted on a 5m $\times$ 3m square platform, and two 17.5cm $\times$ 17.5cm $\times$ 20cm omni-directional mobile cars are used.

\begin{figure*}[ht]
  \centering 
  \setlength{\abovecaptionskip}{0.1cm}
  \subfigcapskip=8pt\subfigure[]
    {
\includegraphics[align=t,width=0.22 \textwidth]{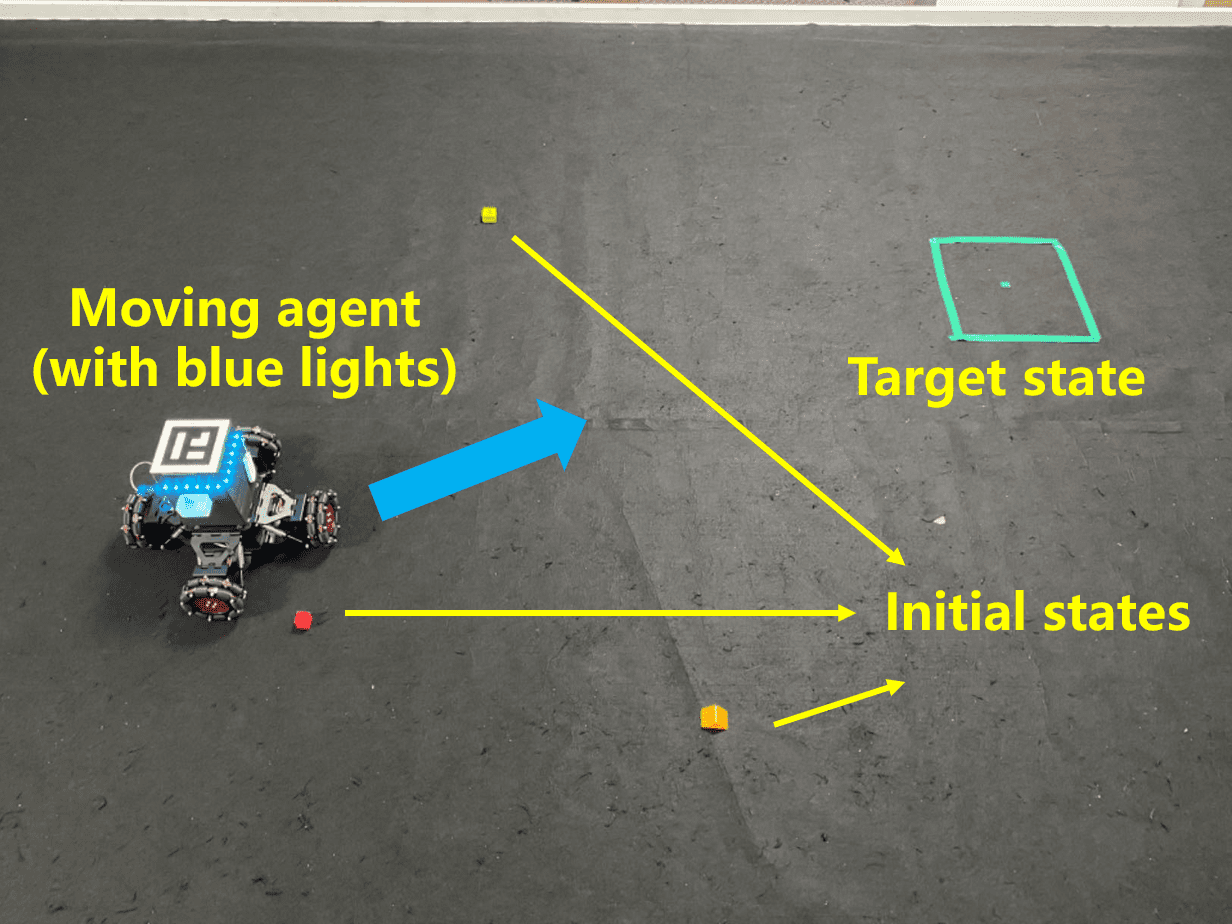} 
    \label{n1}}
    \subfigcapskip=8pt\subfigure[]
    { 
    \label{n2} 
\includegraphics[align=t,width=0.22\textwidth]{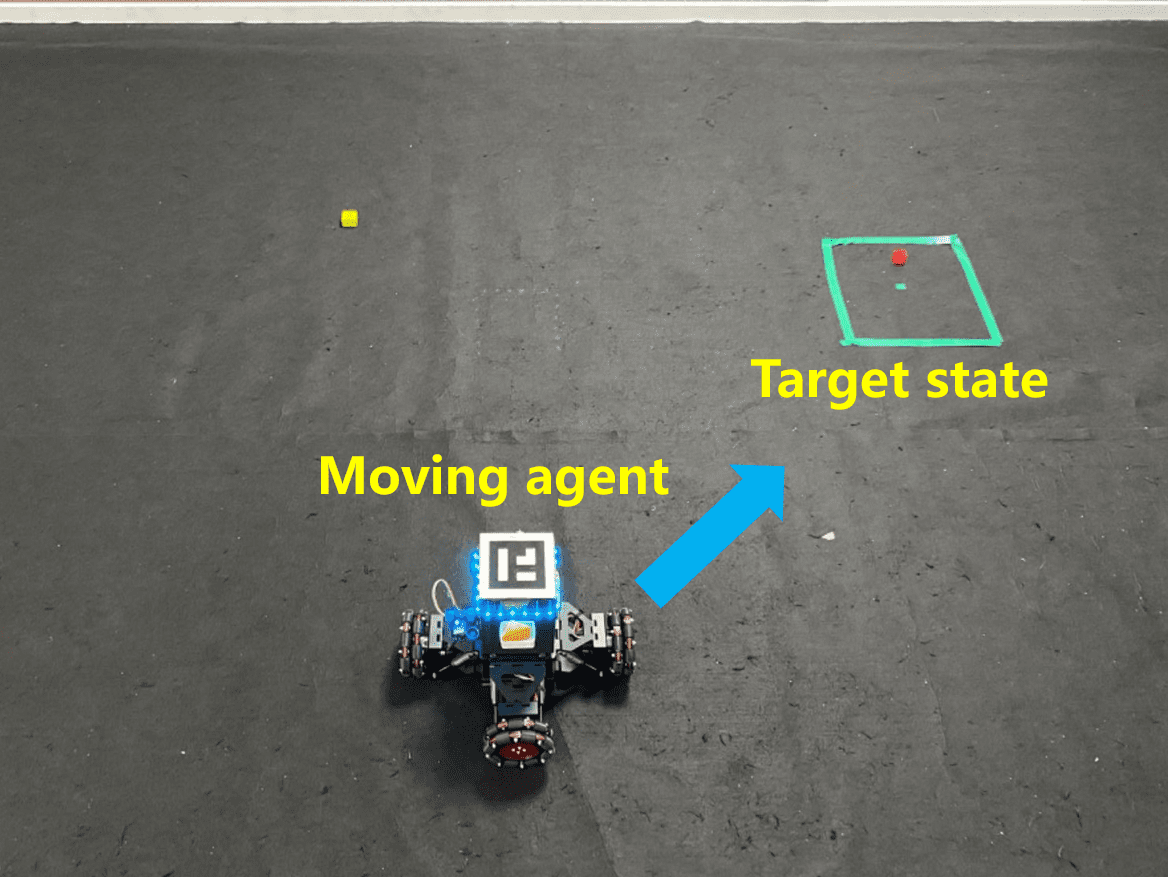} }
    \subfigcapskip=8pt\subfigure[]{ 
    \label{n3} \includegraphics[align=t,width=0.22\textwidth]{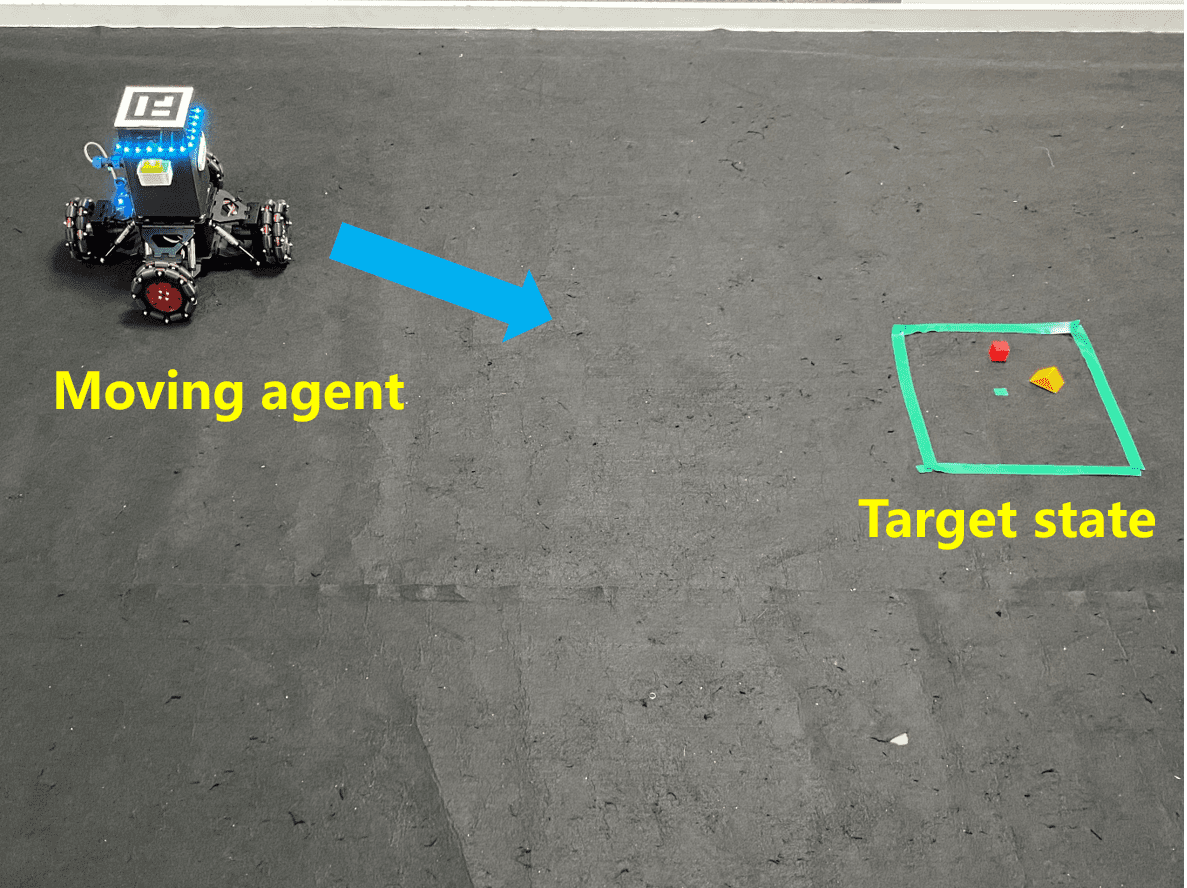} }
    \subfigcapskip=6pt
    \subfigure[]{ 
    \label{n4} 
\includegraphics[align=t,width=0.23\textwidth]{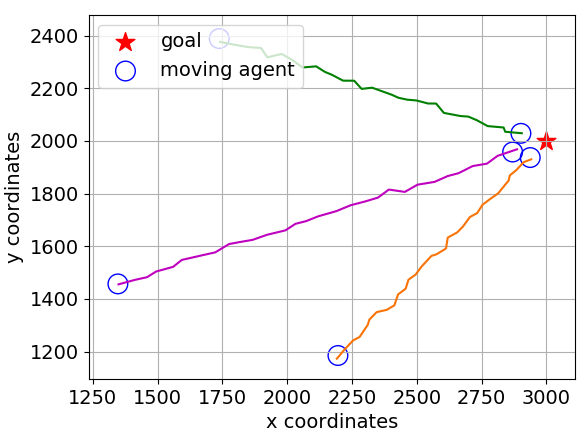} 
    }
\\
  \subfigcapskip=8pt\subfigure[]
    {
    \includegraphics[align=t,width=0.22\textwidth]{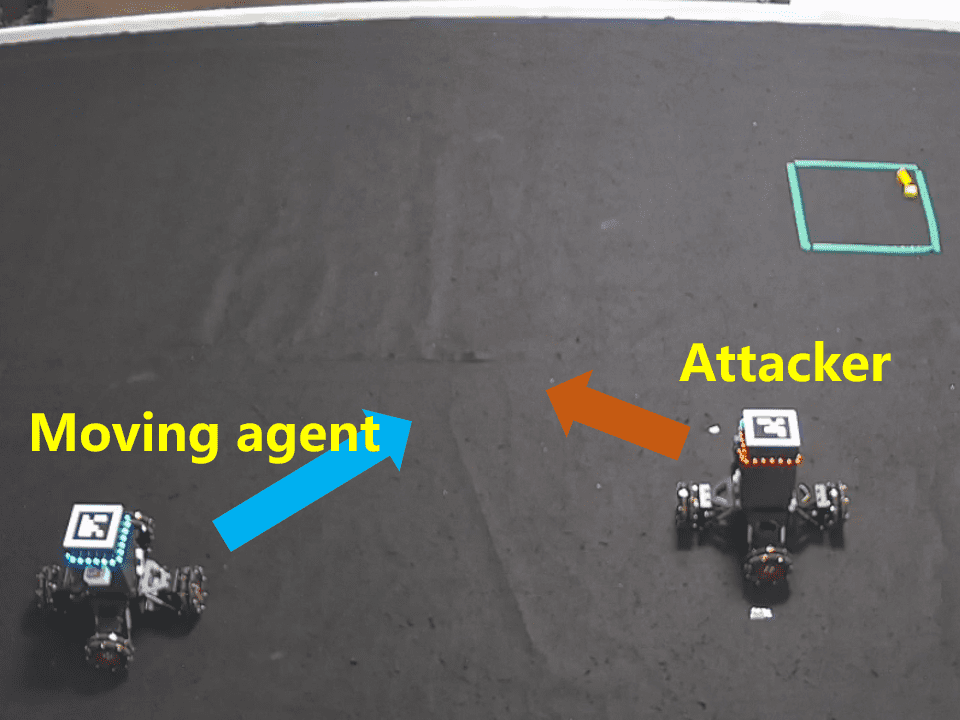} 
    \label{stimulate}
    }
    \subfigcapskip=8pt\subfigure[]
    { 
    \label{speed} \includegraphics[align=t,width=0.22\textwidth]{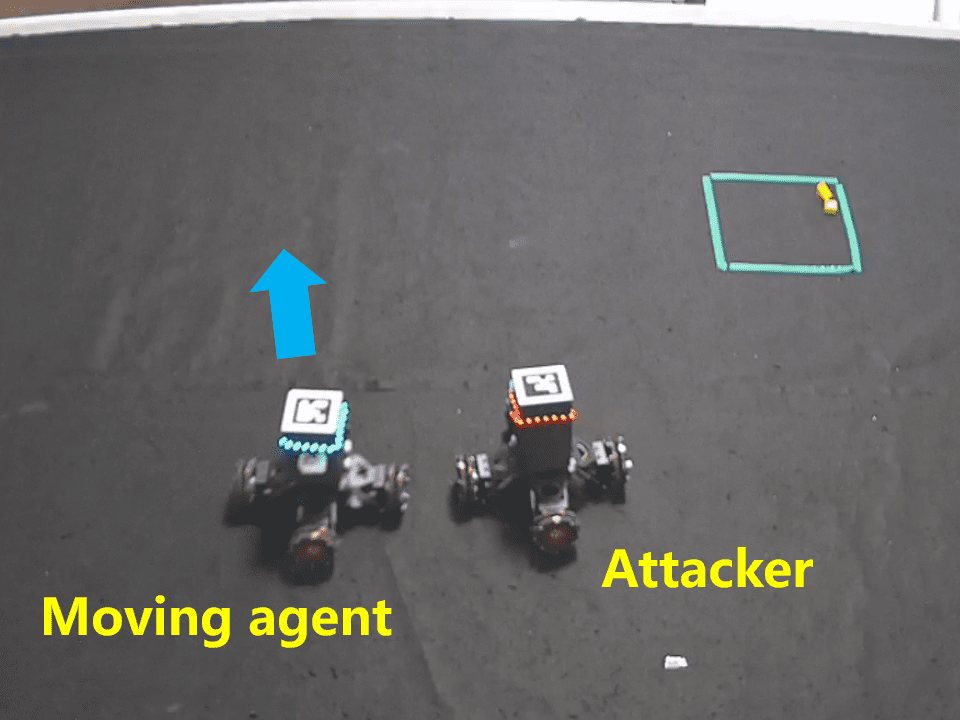} 
    }
    \subfigcapskip=8pt\subfigure[]{ 
    \label{acc} \includegraphics[align=t,width=0.22\textwidth]{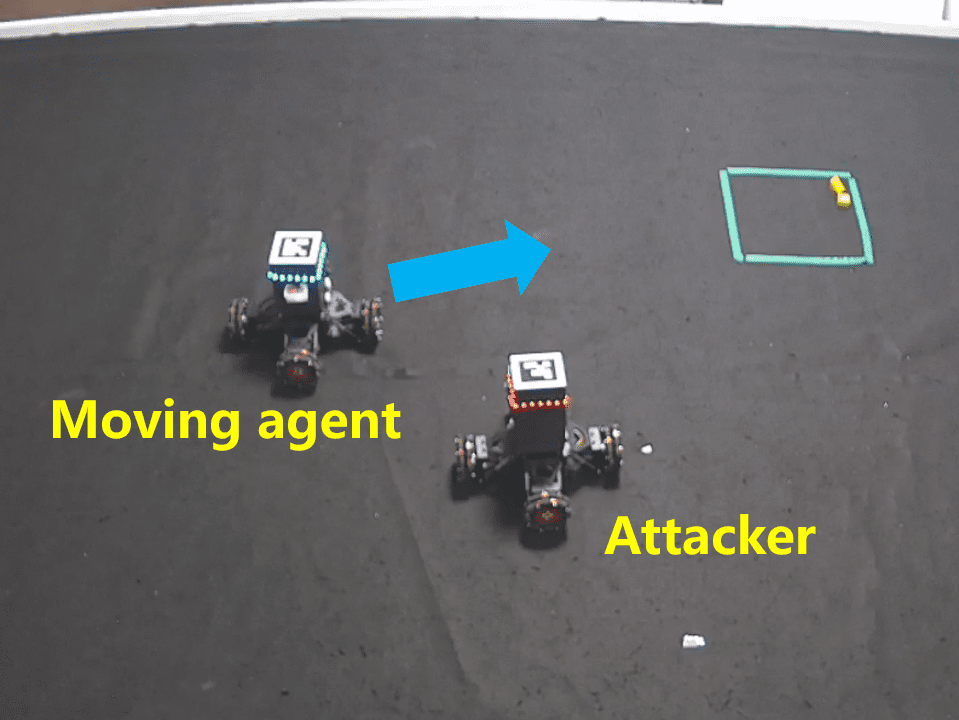} 
    }
    \subfigcapskip=6pt\subfigure[]{ 
    \label{n8} 
\includegraphics[align=t,width=0.23\textwidth]{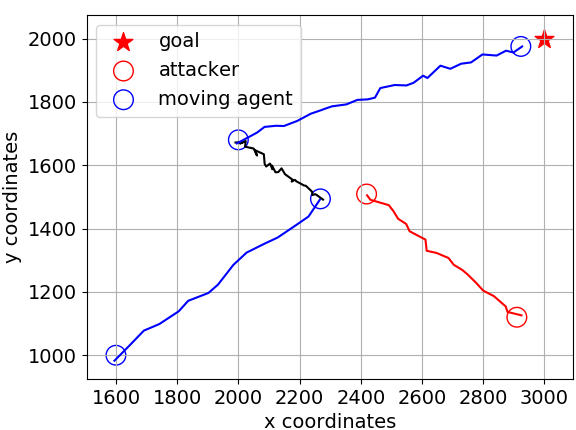} 
    }
  \caption{The illustration of data collection process. The green square is the target state and the moving agent with blue lights drives to the target from different initial position, being observed by the attacker with red lights, which provides multiple unparalleled trajectory observations.}
  \label{collect}
\end{figure*}

\begin{figure}[t]
\setlength{\abovecaptionskip}{0.1cm}
  \centering 
    \subfigure{ 
    \label{pre0} \includegraphics[align=c,width=0.22\textwidth]{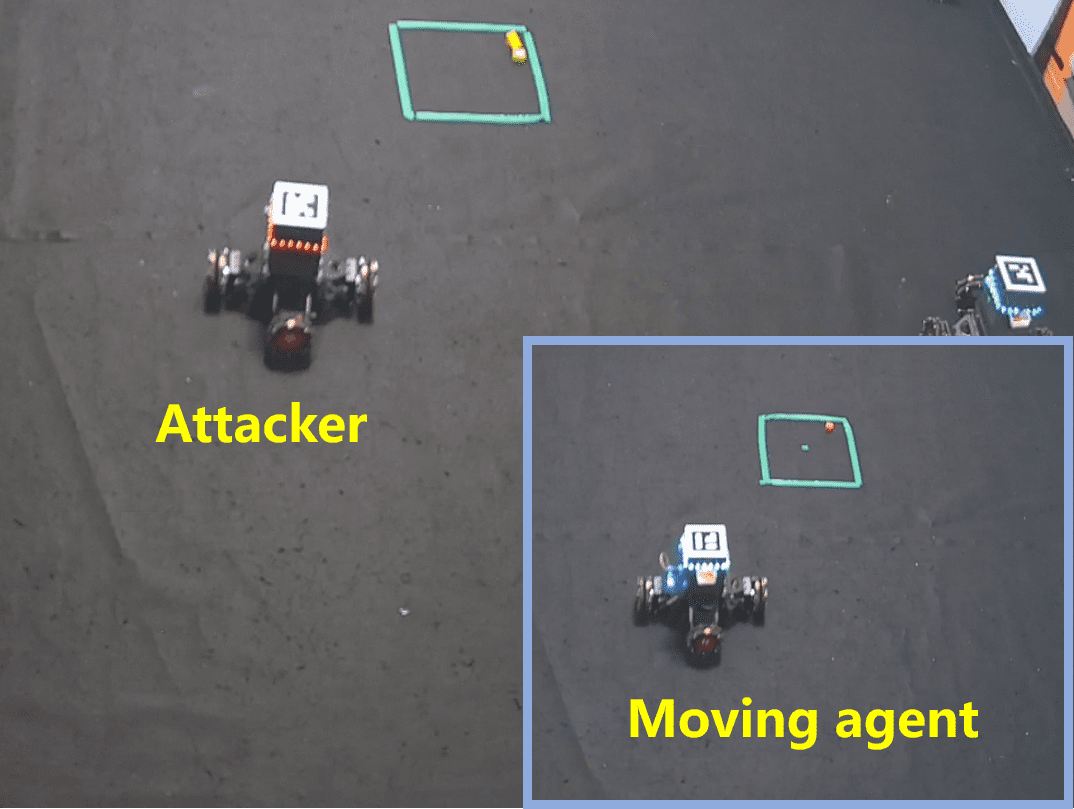} 
    }
    \subfigure{ 
    \label{pre1} \includegraphics[align=c,width=0.22\textwidth]{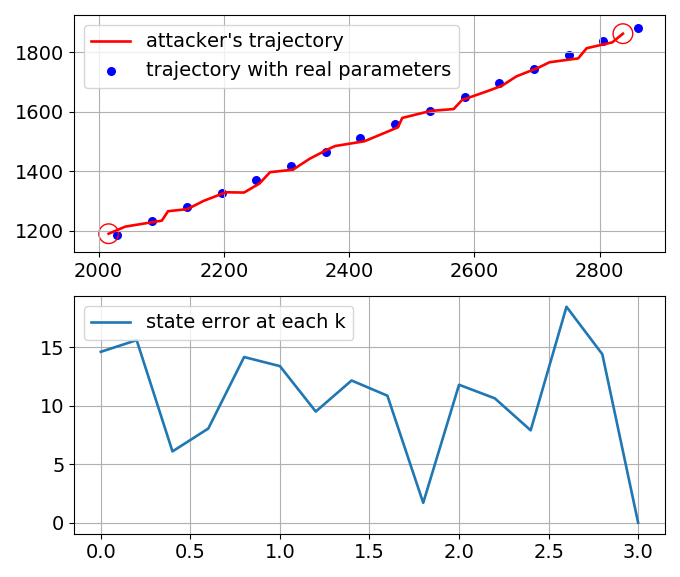} 
    }
    \\
    \subfigure{ 
    \label{pre2} \includegraphics[align=c,width=0.22\textwidth]{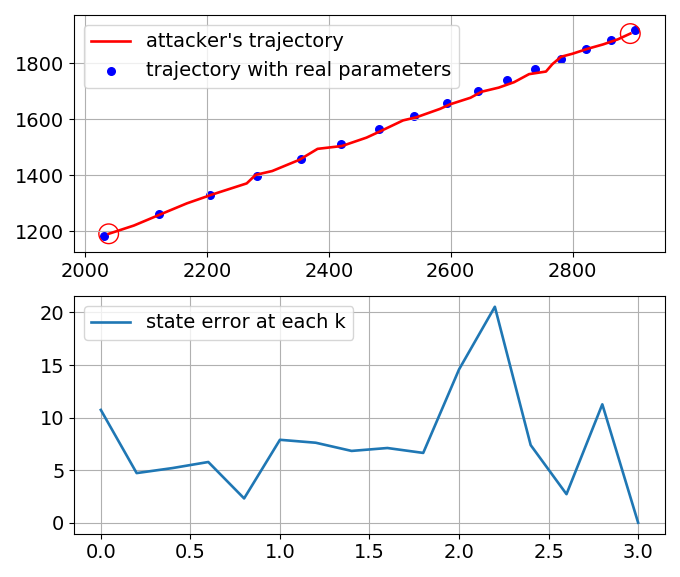} 
    }
    \subfigure{ 
    \label{pre3} \includegraphics[align=c,width=0.22\textwidth]{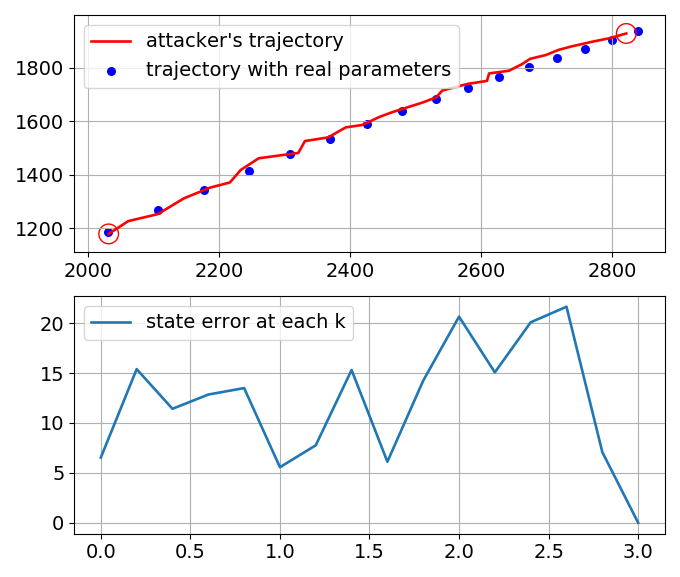} 
    }
  \caption{The attacker learns the control law and imitates trajectories of the moving agent under different motions including uniform linear, variable speed linear and variable speed curve motion.} 
  \label{mimic_pic}
\end{figure}

\begin{figure}[t]
\centering
\includegraphics[width=0.28\textwidth]{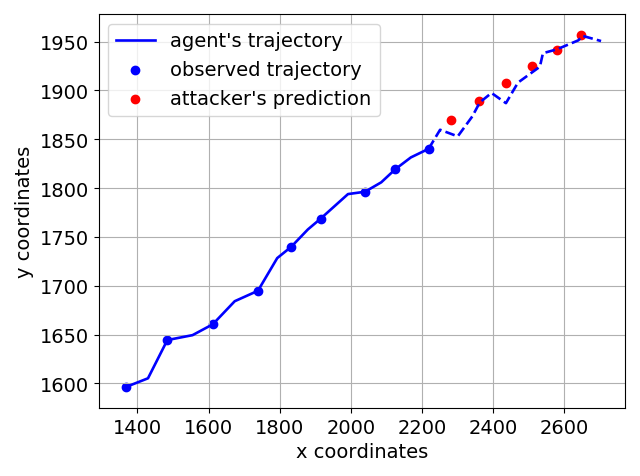}
\caption{Conduct the LQR reconstruction based input prediction algorithm on the self-designed platform. The blue line (and the dash line) is the agent's real trajectory. The blue scatters are observed states sequence and the red are our future state prediction.}
\label{exp_pre}
\end{figure}

We simulate a carrying scenario. A moving agent equipped with the blue light strip transports building blocks from different locations to the green box under LQR optimal control, while the car with red lights acts as an external attacker to observe and record the trajectory of the blue one in order to reconstruct its optimization problem, thereby inferring the future state or imitating its behavior. We have the system state $x_k = (x_k^x \, x_k^y)^T$ and input vector $u_k = (u_k^x \, u_k^y)^T$, where the superscript indicates the horizontal or vertical direction. The control problem of the moving agent is described as $\mathbf{P}_0$ with $A =C= I_2, B=0.2 I_2$ and $N=15, H=5 I_2,Q=0.1 I_2, R=0.5 I_2$. Set $x_T = (3000,2000)^T$ as the target state. 

During the data collection period, two kinds of conditions described in Fig. \ref{data_col} are considered. Notice that in scenario 1, $R_m$ will re-calculate its trajectory after being disturbed. Therefore, we can leverage this mechanism and actively apply external inputs to collect multiple different trajectories. Referring to Fig. \ref{collect}(a)-(d), the moving agent starts from three initial points including $[1430, 1457], [2196,1185], [1738,2389]$ in sequence and drives to the set target, which provides three unparalleled trajectory observations. Another situation is shown as Fig. \ref{collect}(e)-(h) which leverages the attacker as an external stimulate and makes the agent to re-plan from its current state, leading to two different trajectories. Therefore, we take four trajectories as observation data and launch our algorithm.
We have $\hat{x}_T=(2973.3, 1989)^T$ and the estimation results of objective function parameters are
\begin{equation}\nonumber
\begin{aligned}
& \hat{H} = \begin{bmatrix}
    49.0753 &  -2.1655\\
   -2.1655  & 48.6586\\
\end{bmatrix},
\hat{Q} = \begin{bmatrix}
    1.0025 &  -0.0060\\
   -0.0060  &  1.0143\\
\end{bmatrix},\\
&\hat{R} = \begin{bmatrix}
    4.8863 &  -0.2250\\
   -0.2250 &   4.8459\\
\end{bmatrix}.
\end{aligned}
\end{equation}

When the attacker successfully reconstructs the control optimization problem with the estimates of all the parameters, we demonstrate that starting from an arbitrary starting point, the attacker can mimic the trajectories of the agent under different motions and accurately predict the future state with Algorithm \ref{alg2} as shown in Fig. \ref{mimic_pic} and \ref{exp_pre} respectively.
All experiments demonstrate the effectiveness of the proposed reconstruction algorithm. More tests including variable speed and curve motions can be found in the video {https://youtu.be/-rUWd9k3-mg}.

\section{Conclusion}\label{conclusion}\label{conc}
This paper proposes an algorithm for learning the control law of a moving agent under LQR control with optimization problem reconstruction based on observations. Assuming the linear dynamical system and the quadratic objective function form is known, we identify the parameters in the objective function based on inverse optimal control considering of settings and estimate the control horizon with a binary search method. Finally, we reconstruct the control optimization problem of the agent and calculate its solution as the learned control law. We apply our algorithm to predict the future input and conduct extensive simulations and hardware experiments to demonstrate its efficiency. Future work may consider nonlinear modeled agents and identifying more complex objective functions with constraints and obstacles in the environment.

\vspace{5pt}
\begin{appendices}
\section{Proof of Lemma \ref{lem_hr}}\label{pr1}
Apply the problem constraints to the new parameters $H',R'$. We obtain that $(\lambda_i^j)' = \alpha \lambda_i^j$ for all $i,j$ and
\begin{align}\nonumber
(x^j_{i+1})'& = A (x^j_i)' - B(R')^{-1}B^T (\lambda^j_{i+1})'\\ \nonumber & = A (x^j_i)' - BR^{-1} B^T \lambda_{i+1}^j.
\end{align}
Comparing to $x^j_{i+1} = A x^j_i - BR^{-1} B^T \lambda_{i+1}^j$, if we have $(x_0^j)' = x_0^j$, it is easy to realize that $H',R',x_{1:N}^j,(\lambda_{1:N}^j)'$ are also solutions to the problem with the same objective function value of the real parameters $H,R$.

\section{Proof of Theorem \ref{thm_wp}}\label{pr2}
Now we will show that if $A^c_k$ and ${A^c_k}'$ defined in the theorem are equal for all $k$, then we have $R=R'$. We prove this by contradiction. Suppose the two corresponding positive definite matrices $R,R'$ are different, then define
\[
\Delta R = R'-R \neq 0,
\]
where $\Delta R$ is also a symmetric matrix.
Then there are two sets of matrices $P_{0:N}, K_{0:N-1}$ and $P_{0:N}', K_{0:N-1}'$ satisfying the iteration equations:
\begin{align}\label{K}
& K_k =  (R + B^T P_{k+1} B)^{-1} B^T P_{k+1} A,\\
& P_k =K_k^TR K_k + {A^c_k}^T P_{k+1} A^c_k,\label{P}
\end{align}
with $P_{N} = H=I$ respectively. Since we have $A^c_k = {A^c_k}'$ for all $k$, there is
\[
A-B K_k \! =\! A-BK_k' \Leftrightarrow B K_k \!=\! BK_k' \Leftrightarrow B^TB K_k \!=\! B^T B K_k'.
\]
Note that $B$ has full column rank, thus $B^TB$ is invertible. Therefore, we derive $K_k = K_k'$ directly.

According to \eqref{K}, it follows that
\begin{align}\nonumber
(R + B^T P_{k+1} B) K_k &= B^T P_{k+1} A, \\\nonumber
R K_k &= B^T P_{k+1} A^c_k.
\end{align}
Then we also have the above equation for $R'$ and $P_{k+1}'$ which is written as
\begin{align}\nonumber
(R + \Delta R) K_k& = B^T (P_{k+1} + \Delta P_{k+1}) A^c_k,\\ 
\Delta R K_k& = B^T \Delta P_{k+1} A^c_k, \label{delta_r}
\end{align}
where $\Delta P_{k} = P'_k-P_k$.
Similarly, for equation \eqref{P} we have
\begin{align}\nonumber
P_k + \Delta P_k& =K_k^T (R + \Delta R) K_k + {A^c_k}^T (P_{k+1} +\Delta P_{k+1} A^c_k),\\
\Delta P_k& = (K_k^T B^T + {A^c_k}^T) \Delta P_{k+1} A^c_k.\label{delta_p}
\end{align}
Since $P_N = P_N' = I$, then $\Delta P_N = 0$. Combining with \eqref{delta_p}, we have $\Delta P_k = 0$ and $P_k = P_k'$ for all $k$. Thus \eqref{delta_r} converts into
\[
\Delta R K_k = 0, k = 0,1,\dots , N-1.
\]
Note that $I\succ 0, R\succ 0$, then with \eqref{P} we can obtain that $P_k \succ 0$ and is invertible. Therefore, since $P_k, A$ are all invertible matrices, from \eqref{K} we derive $rank(K_k) = rank(B^T) = m$. Thus $K_k$ has full row rank and $\Delta R =0$, which is contradict with the assumption. The proof is done.

\section{Proof of Therorem \ref{thm_r}}\label{pr3}
Denote $z_i = \begin{pmatrix} x_i^T & \lambda_i^T
\end{pmatrix}^T, i = 1, \dots,N$. Then the constraints for each step in Problem \ref{prob_r} is written as
\begin{equation}
\underbrace{\begin{bmatrix}
I & B R^{-1} B^T \\ 0 & A^T
\end{bmatrix}}_E z_{i+1} = 
\underbrace{\begin{bmatrix}
A & 0 \\ 0 & I
\end{bmatrix}}_F z_i.
\end{equation}
Combine all the constraints into a matrix equation as follows
\begin{equation}\label{fzb}
\underbrace{\begin{bmatrix}
\Tilde{E} & & & \Tilde{F} \\
-F & E & & \\
 & \ddots & \ddots & \\
 & & -F & E \\
\end{bmatrix}}_{\mathscr{F}(R)}
\underbrace{\begin{bmatrix}
z_1 \\ z_2 \\ \vdots \\ z_N
\end{bmatrix}}_Z
= \underbrace{\begin{bmatrix}
A \\ 0 \\ \vdots \\ 0
\end{bmatrix}}_{\Tilde{A}}x_0,
\end{equation}
where $\Tilde{E} = \begin{bmatrix}
I & B R^{-1} B^T \\ 0 & 0
\end{bmatrix}$ and $\Tilde{F} = \begin{bmatrix}
0 & 0 \\ -I & I
\end{bmatrix}$. Note that $\mathscr{F}(R)$ is an invertible matrix. Then we have
\begin{equation}\nonumber
\sum_{i = 1}^N \Vert y_i - x^*_i \Vert^2 = \Vert Y - G_X Z \Vert^2 = \Vert Y - G_X \mathscr{F}(R)^{-1} \Tilde{A} x_0 \Vert^2,
\end{equation}
where $G_X = I_{N-1} \otimes [I_n, 0_n]$. 
Then the subsequent proof is similar to the proof of Theorem 4.1 in \cite{zhang2019inverse}. Just replace $\mathscr{F}(Q)$ in \cite{zhang2019inverse} with $\mathscr{F}(R)$ in \eqref{fzb} and we have $\hat{R} \xrightarrow{\mathrm{P}} R$ as $M \xrightarrow{} \infty$, where $\hat{R}$ is the solution obtained by Problem \ref{prob_r} and $R$ is the true parameter in the forward problem.

\section{Proof of Theorem \ref{hqr_iden}}\label{hqr_iden_pr}
This proof will show that the matrix pairs $H,Q,R$ and $H',Q',R'$ obtained through the iteration equations \eqref{dis_k} with same sequence $K_{0:N-1}$ satisfy a scalar multiple relationship under some linearly independent conditions.

From \eqref{dis_k} we have
\begin{equation}\nonumber
\begin{aligned}
& A_k^c= A-BK_k = A- B (R + B^T P_{k+1} B)^{-1} B^T P_{k+1} A \\
& = (I+BR^{-1}B^T P_{k+1})^{-1} A,
\end{aligned}
\end{equation}
where we used the Sherman–Morrison formula.
Since $K_{0:N-1}$ and $K'_{0:N-1}$ are the same, $A_k^c$'s corresponding to $K_k$ and $K'_k$ are the same and we have
\begin{equation}\label{brp}
R^{-1}B^T P_{k+1} = R'^{-1}B^T P'_{k+1}
\end{equation}
for all $k$, where we used that $A$ is invertible and $B$ has full column rank. For $P_k$, we have
\begin{equation}\nonumber
\begin{aligned}
& P_{k-1} = A^TP_{k}A - A^T P_{k} B(R+ B^T P_{k}B)^{-1} B^TP_{k}A+Q \\
& =A^T P_k A_{k-1}^c +Q.
\end{aligned}
\end{equation}
Starting from $k=N$ with $P_N=H,P'_N=H'$, we substitute $P_k$ into \eqref{brp} and obtain the following $N$ equations.
\begin{equation}\nonumber
\begin{aligned}
& R^{-1}B^T H = R'^{-1}B^T H', \\
& R^{-1}B^T (Q+A^T H A_{N-1}^c)=R'^{-1}B^T (Q'+A^T H' A_{N-1}^c),\\
& R^{-1}B^T (Q+A^T Q A_{N-2}^c + (A^T)^2 H A_{N-1}^c A_{N-2}^c)=\\
& R'^{-1}B^T (Q'+A^T Q' A_{N-2}^c + (A^T)^2 H' A_{N-1}^c A_{N-2}^c),\\
& \cdots 
\end{aligned}
\end{equation}
In particular, for $k=i$, we have $R^{-1}B^T P_i = R'^{-1}B^T P'_i$, where
\[
P_i = Q+ \Sigma_{j=1}^{N-i} (A^T)^j Q \Pi_{r=i}^{i+j-1} A_r^c + (A^T)^{N-i} H \Pi_{r=i}^{N-1} A_{r}^c.
\]
We subtract the equation for $k=i+1$ from the equation for $k=i$ and write the difference in the following matrix form:
\begin{equation}\label{bqr_matrix}
R^{-1} \widetilde{\mathcal{BA}}_i \widetilde{QH}_i \widetilde{\mathcal{A}^c}_i = R'^{-1} \widetilde{\mathcal{BA}}_i \widetilde{Q'H'}_i \widetilde{\mathcal{A}^c}_i,
\end{equation}
where 
\begin{equation}\nonumber
\begin{aligned}
& \widetilde{\mathcal{BA}}_i = B^T [A^T, \cdots, (A^{T})^{N-i-1}, (A^{T})^{N-i-1}, (A^{T})^{N-i}],\\
&\widetilde{\mathcal{A}^c}_i \!=\! [({A^c_i}\!-\!{A^c_{i+1}})^T, \!\cdots\!, (\Pi_{r=0}^{p-1}{A^c_{i+r}}\!-\!\Pi_{r=1}^{p}{A^c_{i+r}})^T,\cdots,\\
& ~~~~~~~ (\Pi_{r=0}^{N-1}{A^c_{i+r}})^T, -(\Pi_{r=1}^{N}{A^c_{i+r}})^T, (\Pi_{r=0}^{N}{A^c_{i+r}})^T]^T,\\
& \widetilde{QH}_i = \begin{bmatrix}
I_{N-i-1} \otimes Q& 0 \\ 0& I_{2} \otimes H
\end{bmatrix}.
\end{aligned}
\end{equation}
$\widetilde{Q'H'}_i$ can be obtained by replacing $Q,H$ in $\widetilde{QH}_i$ with $Q',H'$. Index $p$ in the block matrix $\widetilde{\mathcal{A}^c}_i$ refers to the $p$-th block, i.e, for the first block, $p=1$ and it can directly write as $({A^c_i}\!-\!{A^c_{i+1}})^T$.
Take the trace of both sides of equation \eqref{bqr_matrix} and we have
\begin{equation}\label{trace}
\begin{aligned}
\mathrm{tr}(R^{-1} \widetilde{\mathcal{BA}}_i \widetilde{QH}_i \widetilde{\mathcal{A}^c}_i) = 
vec(R^{-1})^T \underbrace{(\widetilde{\mathcal{A}^c}_i^T \otimes \widetilde{\mathcal{BA}}_i)}_{\mathcal{E}_i} vec(\widetilde{QH}_i).
\end{aligned}
\end{equation}

Notice that there are many zero elements and identical matrices in $\widetilde{QH}_i$ and $R^{-1}$. Therefore, the trace equation \eqref{trace} can be simplified through the following three steps:\\
i) Let the set of indices of all non-zero elements in $vec(\widetilde{QH}_i)$ be $\mathcal{C}_1$. Collect all the columns of $\mathcal{E}_i$ with indices in $\mathcal{C}_1$ and form ${\mathcal{P}^1_i}(\mathcal{E}_i)= \mathcal{E}_i(:,\mathcal{C}_1)$. Then we have
\[
vec(R^{-1})^T {\mathcal{P}^1_i}(\mathcal{E}_i) [\textbf{1}_{N-i-1}^T \otimes vec(Q)^T, vec(H)^T,vec(H)^T]^T.
\]
ii) Since $vec(Q)$ and $vec(H)$ appear multiple times, we define $\mathcal{P}^2_i(\mathcal{E}_i)$, where 
\begin{equation}\nonumber
\begin{aligned}
& \mathcal{P}^2_i(\mathcal{E}_i)(:,1:n)=\Sigma_{t=1}^{N-i-1}{\mathcal{P}^1_i}(\mathcal{E}_i)(:,1+m(t-1):mt),\\
& \mathcal{P}^2_i(\mathcal{E}_i)(:,n+1:2n)=\Sigma_{t=N-i}^{N-i+1}{\mathcal{P}^1_i}(\mathcal{E}_i)(:,1+m(t-1):mt).
\end{aligned}
\end{equation}
iii) Since $H,Q \in \mathbb{R}^n$ and $R \in \mathbb{R}^m$ are symmetric, we only need the upper triangle part $vec(\overline{H}), vec(\overline{Q}), vec(\overline{R^{-1}})$ respectively to uniquely determine them. We can further simplify \eqref{trace} as
\begin{equation}\label{pie}
\begin{aligned}
& vec(\overline{R^{-1}})^T \mathcal{P}_i(\mathcal{E}_i) \begin{bmatrix}
vec(\overline{Q}) \\vec(\overline{H})
\end{bmatrix}\\
& = (\begin{bmatrix}
vec(\overline{Q})^T, vec(\overline{H})^T
\end{bmatrix} \otimes vec(\overline{R^{-1}})^T) vec(\mathcal{P}_i(\mathcal{E}_i)) \\
& = (\begin{bmatrix}
vec(\overline{Q'})^T, vec(\overline{H'})^T
\end{bmatrix} \otimes vec(\overline{R'^{-1}})^T) vec(\mathcal{P}_i(\mathcal{E}_i)),
\end{aligned}
\end{equation}
where $\mathcal{P}_i(\mathcal{E}_i)$ is a $\frac{m(m+1)}{2} \times n(n+1)$ matrix. 

If there exist at least $\frac{mn(n+1)(m+1)}{2}$ linearly independent vectors 
$vec(\mathcal{P}_i(\mathcal{E}_i))$ in the horizon $k=0,\dots, N-1$, then we can combine them as a full row rank matrix and obtain
\begin{equation}\nonumber
\begin{aligned}
& vec(\overline{Q'})=\alpha \cdot vec(\overline{Q}), vec(\overline{H'}) =\alpha \cdot vec(\overline{H}),\\ 
&vec(\overline{R'^{-1}}) =\frac{1}{\alpha} \cdot vec(\overline{R^{-1}}).
\end{aligned}
\end{equation} 
Thus, we have ${H'} = \alpha H, {Q'} = \alpha Q, {R'}=\alpha R$.
Note that if the matrix $H,Q,R$ are all diagonal matrices, which is a common setting in practical scenarios, we only need $2nm$ linearly independent $vec(\mathcal{P}_i(\mathcal{E}_i))$.

\section{Proof of Lemma \ref{lem_rq_eq}}\label{pr_lem_rq}
In this proof we will show how to obtain the equation \eqref{rq_eq} from the iterations of $P_k, K_k$. Transform the formula \eqref{dis_k}:
\begin{equation}\nonumber
\begin{aligned}
\eqref{dis_k} & \Leftrightarrow (R + B^T P_{k+1} B) K_k = B^T P_{k+1} A, \\
& \Leftrightarrow R K_k = B^T P_{k+1} (A - B K_k) = B^T P_{k+1} A^c_k.
\end{aligned}
\end{equation}
Since $P_N = H$, for $k=N-1$ we have
\[
R K_{N-1} = B^T H A^c_{N-1} \Leftrightarrow a_1 R b_1 = c_1 H d_1
\]
where $a_1 = I_2, b_1 = K_{N-1}, c_1 = B^T$ and $d_1 = A^c_{N-1}$. For $k=N-2$ there is
\[
R K_{N-2} = B^T P_{N-1} A^c_{N-2}
\]
and substitute $P_{N-1}$ with \eqref{dis_p}
\begin{equation}\nonumber
\begin{aligned}
& R K_{N-2} = B^T (K_{N-1}^T R K_{N-1} + {A^c_{N-1}}^T H A^c_{N-1}+Q) A^c_{N-2}\\
& \Leftrightarrow \underbrace{\begin{bmatrix}I_2 & - B^T K_{N-1}^T\end{bmatrix}}_{a_2} (I_2 \otimes R) \underbrace{\begin{bmatrix}K_{N-2} \\ K_{N-1} A^c_{N-2}\end{bmatrix}}_{b_2} = \\
& ~~~~ \underbrace{\begin{bmatrix}B^T {A^c_{N-1}}^T & B^T\end{bmatrix}}_{c_2} \begin{bmatrix}H & \\ & Q\end{bmatrix} \underbrace{\begin{bmatrix}A^c_{N-1} A^c_{N-2} \\ A^c_{N-2}\end{bmatrix}}_{d_2}.
\end{aligned}
\end{equation}
For $k = N-3$ we have
\begin{equation}\nonumber
\begin{aligned}
& R K_{N-3} = B^T P_{N-2} A^c_{N-3} \\
&= B^T (K_{N-2}^T R K_{N-2} + {A^c_{N-2}}^T P_{N-1} A^c_{N-2}+Q) A^c_{N-3}\\
& = B^T (K_{N-2}^T R K_{N-2} + {A^c_{N-2}}^T (K_{N-1}^T R K_{N-1} \\
& ~~~+ {A^c_{N-1}}^T H A^c_{N-1}+Q) A^c_{N-2}+Q) A^c_{N-3},\\
\end{aligned}
\end{equation}
which can also be transformed into the form 
\[a_3 (I_3 \otimes R) b_3 = c_3 \begin{bmatrix}H &\\& I_2 \otimes Q\end{bmatrix} d_3.\]
Therefore, according to the above derivation we conclude the equation \eqref{rq_eq} for $i=1,\dots,N$, where the parameters $a_i,b_i,c_i,d_i$ are calculated by \eqref{abcd}.

\section{Proof of Theorem \ref{thm_sens}}\label{prf_sens}

\textbf{\emph{a})} We first studies the convergence of $\{P_k\}$ sequence. 
The following lemma provides a convergent property of algebraic Riccati equation.
\begin{lemma}\label{p_conv}
There exist a constant $\gamma >1$ and two non-negative constants $c_1 = c_1(\gamma, P_N, P_{N-1})$ and $c_2 = c_2(\gamma, P_N, P_{N-1})$ such that for any fixed finite control horizon $N$ the following inequality holds for all $k=0,1,\dots,N-1$:
\begin{equation}
\underline{\beta}_k P_{k+1} \leqslant P_{k} \leqslant \overline{\beta}_k P_{k+1},
\end{equation}
where
\begin{equation}\nonumber
\underline{\beta}_k := \frac{\gamma^k (\gamma-1)}{\gamma^k(\gamma-1)+c_1}, \, \overline{\beta}_k := \frac{\gamma^k (\gamma-1) + c_2}{\gamma^k(\gamma-1)}.
\end{equation}
\end{lemma}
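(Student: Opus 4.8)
The plan is to recast the Riccati recursion as a monotone, sub-homogeneous fixed-point iteration and then run an induction that turns a single base-case comparison of the first two iterates into the geometrically tightening bounds claimed. First I would apply the matrix inversion lemma to write the Riccati map as
\[
\mathcal{R}(P) = Q + A^{T}\bigl(P^{-1}+BR^{-1}B^{T}\bigr)^{-1}A,
\]
so that, indexing by the number of steps elapsed from the terminal penalty $H$, consecutive iterates obey $P_{k+1}=\mathcal{R}(P_{k})$. In this form $\mathcal{R}$ is manifestly monotone in the L\"owner order, and a short computation with $f(P)=(P^{-1}+BR^{-1}B^{T})^{-1}$, using $\alpha^{-1}BR^{-1}B^{T}\preceq BR^{-1}B^{T}$ for $\alpha\geqslant 1$, yields the refined homogeneity estimate
\[
\mathcal{R}(\alpha P)\preceq \alpha\,\mathcal{R}(P)-(\alpha-1)Q,\qquad \alpha\geqslant 1,
\]
together with its mirror image $\mathcal{R}(\alpha P)\succeq \alpha\,\mathcal{R}(P)+(1-\alpha)Q$ for $0<\alpha\leqslant 1$. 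The correction term $-(\alpha-1)Q$ is the whole point: it quantifies the strict gain over pure positive homogeneity and is what ultimately forces the consecutive ratios toward $1$.

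Next I would pin down uniform definiteness bounds on the sequence. Because the recursion is monotone (as already asserted in the main text), $P_{k}$ stays trapped between the stabilizing solution $P^{*}$ and $H$, giving $\underline{P}\,I\preceq P_{k}\preceq \overline{P}\,I$ with, e.g., $\overline{P}=\lambda_{\max}(H)$ in the monotonically decreasing regime. These bounds let me dominate the iterates by $Q$ multiplicatively: there is a constant $c\in(0,1)$ with $Q\succeq c\,P_{k}$ for every $k$, taking $c=\lambda_{\min}(Q)/\overline{P}$ when $Q\succ 0$. Setting $\gamma:=1/(1-c)>1$ fixes the geometric factor appearing in the statement.

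The induction is then short and symmetric. For the base case I choose $\overline{\beta}_{0},\underline{\beta}_{0}$, equivalently $c_{1},c_{2}$, from the extreme eigenvalues of $P_{N-1}^{-1/2}P_{N}P_{N-1}^{-1/2}$ so that $\underline{\beta}_{0}P_{N-1}\preceq P_{N}\preceq\overline{\beta}_{0}P_{N-1}$; this is exactly where the dependence $c_{1},c_{2}=c_{1,2}(\gamma,P_{N},P_{N-1})$ enters. For the upper bound, assuming $P_{k}\preceq\overline{\beta}_{k}P_{k+1}$, monotonicity and the refined inequality give $P_{k+1}=\mathcal{R}(P_{k})\preceq\overline{\beta}_{k}P_{k+2}-(\overline{\beta}_{k}-1)Q$, and absorbing the correction via $Q\succeq c\,P_{k+2}$ yields $P_{k+1}\preceq\overline{\beta}_{k+1}P_{k+2}$ with $\overline{\beta}_{k+1}-1=(1-c)(\overline{\beta}_{k}-1)$; unrolling gives $\overline{\beta}_{k}-1=\gamma^{-k}(\overline{\beta}_{0}-1)$, i.e. precisely $\overline{\beta}_{k}=1+c_{2}/(\gamma^{k}(\gamma-1))$ with $c_{2}=(\gamma-1)(\overline{\beta}_{0}-1)$. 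The lower bound is obtained by applying the same inequality to the reversed comparison $P_{k+1}\preceq\underline{\beta}_{k}^{-1}P_{k}$ and tracking $\underline{\beta}_{k}^{-1}$, which satisfies $\underline{\beta}_{k+1}^{-1}-1=\gamma^{-1}(\underline{\beta}_{k}^{-1}-1)$ and unrolls to exactly $\underline{\beta}_{k}=\gamma^{k}(\gamma-1)/(\gamma^{k}(\gamma-1)+c_{1})$ with $c_{1}=(\gamma-1)(\underline{\beta}_{0}^{-1}-1)$. For a monotone sequence one of $c_{1},c_{2}$ is in fact zero (the trivial side), which is consistent with the statement's "non-negative" constants.

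The main obstacle I anticipate is not the induction but securing the uniform constant $c$, i.e. a strictly positive multiple of each $P_{k}$ dominated by $Q$, in the degenerate case where $Q$ is only positive semidefinite (as in the final-state-only setting, $Q=0$). There $c=\lambda_{\min}(Q)/\overline{P}$ vanishes and one must instead invoke detectability, so that the combined effect of $Q$ and the propagated terminal term $H$ accumulates over a window of steps to a strictly positive multiple of $P_{k}$; pushing this windowed estimate through the one-step recursion while still reproducing the exact closed forms of $\underline{\beta}_{k}$ and $\overline{\beta}_{k}$ is the delicate part. A secondary care point is index alignment, since the recursion in the main text runs backward from $P_{N}=H$ whereas the bounds are indexed to tighten as the elapsed iteration count grows, so the relabeling above must be stated explicitly before the estimates are invoked in the proof of Theorem \ref{thm_sens}.
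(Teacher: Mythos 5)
Your proposal is essentially correct, but it takes a genuinely different route from the paper: the paper does not prove this lemma at all, it simply writes ``see the proof of Proposition 7 in \cite{cai2017convergent}'' and only supplies the concrete constants ($c_1,c_2$ from the extreme eigenvalues of $\Phi=P_N^{-1/2}P_{N-1}P_N^{-1/2}$, and $\gamma=1/(1-\sigma)$), whereas you reconstruct a self-contained argument. Your two key ingredients check out: the rewriting $\mathcal{R}(P)=Q+A^{T}(P^{-1}+BR^{-1}B^{T})^{-1}A$ gives $\mathcal{R}(\alpha P)=Q+\alpha A^{T}(P^{-1}+\alpha BR^{-1}B^{T})^{-1}A\preceq \alpha\mathcal{R}(P)-(\alpha-1)Q$ for $\alpha\geqslant 1$ exactly as you claim, and the induction with $\overline{\beta}_{k+1}-1=(1-c)(\overline{\beta}_k-1)$ (and its mirror for $\underline{\beta}_k^{-1}$) unrolls to precisely the stated closed forms with $\gamma=1/(1-c)$. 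What your version buys is transparency about where $\gamma>1$ comes from and about the fact that for a monotone sequence one of $c_1,c_2$ vanishes; what the citation buys the paper is brevity. Two caveats. First, the degenerate case $Q=0$, which you correctly flag as the delicate point, is not left open by the paper: in part \emph{a)} of the proof of Theorem \ref{thm_sens} it replaces $Q$ by the positive-definite lower bound $Q_a$ (or $Q_b$) on $K_k^{T}RK_k=P_k-(A^c_k)^{T}P_{k+1}A^c_k$ obtained from the monotone trapping $P^*\lesseqgtr P_k\lesseqgtr P_N$, and defines $\sigma$ and $\gamma$ from that surrogate; this is a cleaner fix than the detectability/windowing argument you sketch, and you could adopt it verbatim to close your gap. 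Second, the index-orientation issue you raise is real and is present in the paper's own statement (the bounds as written tighten toward the terminal condition $P_N$ rather than away from it, which is inconsistent with how the lemma is later invoked to make $\Vert P_1-P_0\Vert$ small for large $N$); your explicit relabeling by elapsed iterations from $H$ is the correct reading and should indeed be stated, since the lemma is only meaningful under that convention.
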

\begin{proof}
See the proof of Proposition $7$ in \cite{cai2017convergent}.
\end{proof}
\noindent Lemma \ref{p_conv} reveals both monotonic and convergent properties of the $\{P_k\}$ sequence. 
Denote a positive definite matrix $\Phi=P_N^{-1/2} P_{N-1} P_N^{-1/2}$ and let
\begin{equation}
\begin{array}{ll}
& c_1 = \max \{0, (\frac{1}{\lambda_{\min}(\Phi)}-1)(\gamma -1)\}, \\
& c_2 = \max \{0, (\lambda_{\max}(\Phi)-1)(\gamma -1)\}.
\end{array}
\end{equation}
The corresponding parameter $\gamma := 1/(1-\sigma)$ is calculated as formula (16) in \cite{cai2017convergent}. However, one basic assumptions of Lemma \ref{p_conv} is $Q>0$, while our problem may have $Q=0$ (in Sec. \ref{ioc}-B). If $Q=0$, their calculation of parameter $\gamma$ will fail and we provide the following selection method of $\gamma$ instead.

Since the sequence $\{P_k\}$ is monotonic, we will discuss two kinds of conditions: increasing and decreasing.
Firstly, we have
\begin{align}\nonumber
& P_k - (A \!-\! B K_k )^T P_{k+1} (A - B K_k) =K_k^TR K_k\!= \!A^T\! \\ \nonumber
& P_{k+1}^T \!B (R\! +\! B^T \!P_{k+1} B)^{-T} \! R ({R}\! +\! B^T\! P_{k+1} B)^{-1}\! B^T\! P_{k+1} A.
\end{align}

i) If the sequence $\{P_k\}$ is monotonically decreasing, there is $P_N \geqslant P_k \geqslant P^*$ for all $k$. Thus we have
\begin{align}\nonumber
& P_k - (A \!-\! B K_k )^T P_{k+1} (A - B K_k) \\ \nonumber
& \geqslant \!A^T\! {P^*}^T \!B (R\! +\! B^T \!P_{N} B)^{-T} \! R ({R}\! +\! B^T\! P_{N} B)^{-1}\! B^T\! P^* A \\ \nonumber
& := Q_a >0.
\end{align}

ii) If the sequence $\{P_k\}$ is monotonically increasing, there is $P_N \leqslant P_k \leqslant P^*$ for all $k$. Thus we have
\begin{align}\nonumber
& P_k - (A \!-\! B K_k )^T P_{k+1} (A - B K_k) \\ \nonumber
& \geqslant \!A^T\! {P_{N}}^T \!B (R\! +\! B^T \!P^* B)^{-T} \! R ({R}\! +\! B^T\! P^* B)^{-1}\! B^T\! P_{N} A\\ \nonumber
& := Q_b >0.
\end{align}
Therefore, let
\[
\Gamma := \frac{1}{\kappa} \cdot Q_a(\mathrm{or} \,\,Q_b), \, \sigma = \lambda_{\min}(\Gamma).
\]
Select $\gamma := 1/(1-\sigma)$ and we have $\gamma>1$.

\textbf{\emph{b})} We will then figure out the convergence property of $\{K_k\}$ with Lemma \ref{p_conv}. Similarly, we discuss the following two monotonic situations separately:

i) If $P_N \leqslant P_{N-1}$ holds, we have $c_1 = 0, c_2 \geqslant 0$ and $P_k \leqslant P_{k-1} \leqslant \overline{\beta}_k P_{k+1}$ for all $k$, which yields that
\begin{align}\nonumber
P_{k}-P_{k+1} &\leqslant \overline{\beta}_k P_{k+1} -  P_{k+1} \\ \nonumber 
& \leqslant (\overline{\beta}_k-1) P_{k+1}\leqslant (\overline{\beta}_k-1) P_{N}.
\end{align}
Then $\Vert P_k-P_{k+1} \Vert \leqslant (\overline{\beta}_k-1) \Vert P_N \Vert$. There always exists $N_a \in \mathbb{N}_+$ and $\epsilon > 0$, for any $N>N_a$, we obtain $\Vert P_{1}-P_0 \Vert \leqslant \epsilon$.
\begin{align}\nonumber
&\Vert K_{0}-K_1 \Vert \\ \nonumber
& = \Vert ({R} + B^T P_1 B)^{-1} B^T P_1 A - ({R} + B^T P_2 B)^{-1} B^T P_2 A \Vert \\ \nonumber
& \leqslant \Vert ({R} + B^T P_2 B)^{-1} B^T (P_1-P_2) A \Vert \\ \nonumber
& \leqslant \Vert ({R} + B^T P_N B)^{-1} B^T\Vert \cdot \Vert P_1-P_2 \Vert \cdot \Vert A \Vert =: \eta_a
\end{align}
    
ii) If $P_N \geqslant P_{N-1}$ holds, we have $c_1\geqslant 0, c_2 = 0$ and $\underline{\beta}_k P_{k+1} \leqslant P_k \leqslant P_{k+1}$ for all $k$, which yields that
\begin{align}\nonumber
P_{k+1}-P_k &\leqslant P_{k+1} - \underline{\beta}_k P_{k+1} \\ \nonumber 
& \leqslant (1-\underline{\beta}_k) P_{k+1}\leqslant (1-\underline{\beta}_k) P_{N}.
\end{align}
Then $\Vert P_{k+1}-P_k \Vert \leqslant (1-\underline{\beta}_k) \Vert P_N \Vert$. There always exists $N_b \in \mathbb{N}_+$ and $\epsilon > 0$, for any $N>N_b$, we obtain $\Vert P_{1}-P_0 \Vert \leqslant \epsilon$.
\begin{align}\nonumber
&\Vert K_{1}-K_0 \Vert \\ \nonumber
& = \Vert ({R} + B^T P_2 B)^{-1} B^T P_2 A - ({R} + B^T P_1 B)^{-1} B^T P_1 A \Vert \\ \nonumber
& \leqslant \Vert ({R} + B^T P_1 B)^{-1} B^T (P_2-P_1) A \Vert \\ \nonumber
& \leqslant \Vert ({R} + B^T P^* B)^{-1} B^T\Vert \cdot \Vert P_2-P_1 \Vert \cdot \Vert A \Vert =: \eta_b
\end{align}

\textbf{\emph{c})} Based on previous discussion, we provide the proof of Theorem \ref{thm_sens}. When $N> \overline{N} = N_a(\mathrm{or}\, N_b)$, there is
\begin{equation}\nonumber
\begin{aligned}
\Vert \mu_0^{(N+\delta N)} - \mu_0^{(N)}\Vert &= \Vert K_0^{(N+\delta N)} - K_0^{(N)}\Vert \cdot \Vert x_0 \Vert \\
& =  \Vert K_0^{(N+\delta N)} - K_{\delta N}^{(N+\delta N)}\Vert \cdot \Vert x_0 \Vert \\
& \leqslant \delta N \Vert K_0^{(N+\delta N)} - K_{1}^{(N+\delta N)}\Vert \cdot \Vert x_0 \Vert \\
& \leqslant \delta N \cdot \eta_a(\mathrm{or}\, \eta_b) \cdot \Vert x_0 \Vert =: \eta.
\end{aligned}
\end{equation}
The proof is done.

\end{appendices}

\bibliographystyle{IEEEtran}
\bibliography{reference}

\begin{IEEEbiographynophoto}{Chendi Qu} received the B.E. degree in the Department of Automation from Tsinghua University, Beijing, China, in 2021. 
She is currently working toward the Ph.D. degree with the Department of Automation, Shanghai Jiao Tong University, Shanghai, China. 
She is a member of Intelligent Wireless Networks and Cooperative Control group. 
Her research interests include robotics, security of cyber-physical system, and distributed optimization and learning in multi-agent networks. 
\end{IEEEbiographynophoto}

\begin{IEEEbiographynophoto}{Jianping He} 
(SM’19) is an Associate Professor in the Department of
Automation at Shanghai Jiao Tong University. He received the Ph.D. degree in control science and engineering from Zhejiang University, Hangzhou, China, in 2013, and had been a research fellow in the Department of Electrical and Computer Engineering at University of Victoria, Canada, from Dec. 2013 to Mar. 2017. His research interests mainly include the distributed learning,
control and optimization, security and privacy in network systems.

Dr. He serves as an Associate Editor for IEEE Trans. Control of Network Systems, IEEE Open Journal of Vehicular Technology, and KSII Trans. Internet and Information Systems. He was also a Guest Editor of IEEE TAC, IEEE TII, International Journal of Robust and Nonlinear Control, etc. He was the winner of Outstanding Thesis Award, Chinese Association of Automation, 2015. He received the best paper award from IEEE WCSP'17, the best conference paper
award from IEEE PESGM'17, and was a finalist for the best student paper award from IEEE ICCA'17, and the finalist best conference paper award from IEEE VTC'20-FALL.
\end{IEEEbiographynophoto}

\begin{IEEEbiographynophoto}{Xiaoming Duan} 
is an assistant professor in the Department of Automation at Shanghai Jiao Tong University.
He obtained his B.E. degree in Automation from the
Beijing Institute of Technology in 2013, his Master’s Degree in Control Science and Engineering from Zhejiang
University in 2016, and his Ph.D. degree in Mechanical
Engineering from the University of California at Santa
Barbara in 2020. He was a postdoctoral fellow in
the Oden Institute for Computational Engineering and
Sciences at the University of Texas at Austin in 2021.
His research interests include robotics, multi-agent
systems, and autonomous systems.
\end{IEEEbiographynophoto}

\end{document}